\def\calB{{\cal B}}
\def\baru{{\bar{u}}}
\newcommand{\matr}[1]{\mathbf{#1}}
\newcommand{\iu}{\mathrm{i}\mkern1mu}
\newcommand{\LL}{\mathcal{L}}
\DeclareMathOperator*{\res}{Res}
\title{Semi-analytic pricing of double barrier options with time-dependent barriers and rebates at hit}
\def\thetitle1{Semi-analytic pricing of double barrier options...}
\author{
\authorstyle{
Andrey Itkin{}
\textsuperscript{1}
and Dmitry Muravey
\textsuperscript{2}
}
\newline\newline
\textsuperscript{1}
\institution{Tandon School of Engineering, New York University, 1 Metro Tech Center, 10th floor, Brooklyn NY 11201, USA} \\
\textsuperscript{2}
\institution{Moscow State University, Moscow, Russia}
}
\date{\today}
\begin{document}

\maketitle

\lettrineabstract{We continue a series of papers devoted to construction of semi-analytic solutions for barrier options. These options are written on underlying following some simple one-factor diffusion model, but all the parameters of the model as well as the barriers are time-dependent. We managed to show that these solutions are systematically more efficient for pricing and calibration than, eg., the corresponding finite-difference solvers. In this paper we extend this technique to pricing double barrier options and present two approaches to solving it: the General Integral transform method and the Heat Potential method. Our results confirm that for double barrier options these semi-analytic techniques are also more efficient than the traditional numerical methods used to solve this type of problems.}

\vspace{0.5in}

\section*{Introduction}

Classical problems of financial mathematics recently got new attention due to several factors. Among them one could mention:

\begin{itemize}
\item Very small or even negative interest rates observed at the market, and also forced by the Federal Reserve for achieving its macroeconomic goals, see, eg., \citep{ItkinLiptonMuravey} and reference therein. Therefore. financial models that allow negative rates recently redrew much attention.

\item  Negative oil prices due to the COVID-19 pandemic and the following economic recession, see \citep{Bouchouev2020, Cesa2020}.

\item Another consequence of the COVID-19  was a huge shift to electronic trading since major options exchanges temporarily closed their floors, and brokers and market makers were adjusting to working from home. That raised the need for real-time tools for fast calculating the option prices and Greeks, \citep{Brogan2020}.

\end{itemize}

Those and some other aspects forced the financial society to critically reassess  even simple classical one-factor models of mathematical finance, and reanimate some of them, for instance the Ornstein-Uhlenbeck (OU) process, that traditionally have been referred to as defective/ill-posed or problematic. In \citep{Doff2020} it is advocated that risk managers could even use Black-Scholes to help drive strategy. Therefore, nowadays, for instance, fast pricing of barrier options even for those simple models could be of an increasing importance. That is what this paper is devoted to as applied to double barrier options.

In what follows we consider these options written on the underlying which temporal dynamics is driven by a simple one-factor diffusion process but with time-dependent coefficients. Also, both barriers are assumed to be time dependent. Finally, when the underlying process hits any of the barriers, the Call option holder gets a rebate-at-hit (different for the upper and lower barriers), and they are also time-dependent. It is important that in this paper we consider only the underlying dynamics whose option pricing problem by using the Feynman-Kac theorem and also some transformations could be reduced to the heat equation. Nevertheless, to the best of our knowledge,  even with this simplification a closed-form solution of this problem is yet unknown.

However, we have to mention \citep{Mijatovic2010}, where a similar problem was solved by using a probabilistic approach to obtain a decomposition of the barrier option price into the corresponding European option price minus the barrier premium for a wide class of payoff functions, barrier functions and linear diffusions (i.e., the drift is constant and the local volatility is a function  of the underlying only). For this setting it is shown in \citep{Mijatovic2010} that the barrier premium can be expressed as a sum of integrals along the barriers of the option's delta at the barriers, and that those deltas solve a system of Volterra integral equations of the second kind. This is similar to the idea of the generalized integral transform (GIT) method that we use in this paper, while our setting is more general. Indeed, we allow any diffusion model with time-dependent coefficients and time-dependent barriers and rebates at hit subject to the condition that the pricing partial differential equation (PDE) can be reduced to the heat equation (or, as shown in \citep{CarrItkinMuravey2020} to the Bessel equation). It can also be checked that the pricing PDE in \citep{Mijatovic2010} by a simple change of the spatial variable can be transformed to the heat equation.

Our approach advocated in this paper further extends the technique we elaborated in a series of papers which dealt with a similar problem for single barrier options. In \citep{CarrItkin2020} we developed semi-analytic solutions for the barrier (perhaps, time-dependent) and American options where the underlying stock is driven by a time-dependent OU process with a lognormal drift. This model is equivalent to the familiar Hull-White model in Fixed Income that was separately considered in \citep{ItkinMuravey2020}. In all cases the solution was obtained by using the method of heat potentials (HP) and the GIT method. While the HP method is well-known in mathematical physics and engineering, \citep{TS1963, Friedman1964, kartashov2001}, it is less known as applied to finance. The first use of this method in finance is due to \citep{Lipton2002} for pricing path-dependent options with curvilinear barriers, and more recently in \citep{LiptonKu2018, LiptonPrado2020} (also see references therein).

The GIT method is also known in physics, \citep{kartashov1999, kartashov2001}, but was unknown in finance until the first use  in \citep{CarrItkin2020}. It is important, that it solves the problems where the underlying is defined at the domain $S \in [0,y(t)]$ with $S$ being the stock price, and $y(t)$ being the time-dependent barrier, however, for other domains the solution was unknown even in physics. Then in \citep{ItkinMuravey2020} the GIT solution for the first time was constructed for the domain $S \in [y(t), \infty)$.

Latter this technique was elaborated also for the CIR and CEV models, \citep{CarrItkinMuravey2020}, and the Black-Karasinski model, \citep{ItkinLiptonMuravey}. In particular, in \citep{CarrItkinMuravey2020} the HP method was further generalized to be capable to solving not just the heat but also the Bessel equations, and was called the Bessel potential (BP) method. In \citep{ItkinLiptonMuravey} the PDE is also of a special kind. It is a flavor of the time-dependent Schr\"{o}dinger equation with the unsteady Morse potential (this can be obtained by the change of variables $ x \to -x$ and $\tau \to -\iu \tau$, $\iu = \sqrt{-1}$).

To make it rigorous, in this context a semi-analytic solution means that given a model with the time-dependent drift and  volatility functions, and also with the time-dependent barriers, we obtain the barrier option price in the explicit (analytic) form as an integral in the time $t$. However, this integral contains yet unknown function $\Psi(t)$ which solves some Volterra equation of the second kind which also obtained in our papers. Therefore, we think that "semi-analytic" is an appropriate term. Also, in some situations $\Psi(t)$ can be found analytically, see eg., \citep{CarrItkin2020,ItkinMuravey2020}.

In addition to the explicit analytic representation of the solution, another advantage of this approach is computational speed and accuracy. As this is demonstrated  in the above cited papers, our method is more efficient than both the backward and forward finite difference (FD) methods while providing better accuracy and stability. To briefly explain this, let us mention that the FD method we used (and this is pretty standard) provides accuracy $O(h^2)$ in space  and $O(\tau^2)$ in time, where $h,\tau$ are the corresponding grid steps. Since in our method the solution is represented as a time integral, it can be computed with higher accuracy in time (eg., by using high order quadratures) , while the dependence on the space coordinate $x$ is explicit. Contrary, increasing the accuracy for the FD method is not easy (i.e., it significantly increases the complexity of the method, e.g., see \citep{ItkinBook}). Then the total accuracy is determined by the accuracy of solving the Volterra equation which is also determined by the order of quadratures used to compute the integral in this equation. For instance, using Gaussian quadratures allows small number of nodes and also high accuracy, in more detail see \citep{ItkinMuravey2020,CarrItkinMuravey2020}.

Also, as mentioned in \citep{CarrItkinMuravey2020}, another advantage of our approach is computation of option Greeks. Since the option prices in both the HP and GIT methods are represented in closed form via integrals, the explicit dependence of prices on the model parameters is available and transparent. Therefore, explicit representations of the option Greeks can be obtained by a simple differentiation under the integrals. This means that the values of Greeks can be calculated simultaneously with the prices almost with no increase in time. This is because differentiation under the integrals slightly changes the integrands, and these changes could be represented as changes in weights of the quadrature scheme used to numerically compute the integrals. Since the major computational time has to be spent for computation of densities which contain special functions, they can be saved during the calculation of the prices, and then reused for computation of Greeks.

One can be curious why we need two methods - the HP and GIT, if they are used to solve the same problem and demonstrate the same performance. The answer is kind of elegant. As shown in \citep{CarrItkinMuravey2020}, the GIT method produces very accurate results at high strikes and maturities (i.e. where the option price is relatively small) in contrast to the HP method. This can be verified by looking at the exponents under the GIT solution integral which are {\it proportional} to the time $\tau$. Contrary, when the price is higher (short maturities, low strikes) the GIT method is slightly less accurate than the HP method, as the exponents in the HP  solution integral are {\it inversely proportional} to $\tau$.  Thus, both methods are complementary.

This situation is well investigated for the heat equation with constant coefficients. There exist two representation of the
solution: one - obtained by using the method of images, and the other one - by the Fourier series. Despite
both solutions are equal as the infinite series, their convergence properties are different, \citep{Lipton2002}.

Going back to the problem considered in this paper, we skip the explicit formulation of the model. Instead we define a wide class of models where pricing double barrier options can be translated to solving the heat equation with time-dependent boundaries (barriers) and time-dependent boundary conditions (rebates-at-hit). Note, that the problems considered in the above cited paper - pricing barrier and American options in the time-dependent OU process, pricing barrier options in the Hull-White model, etc., also fit to this class as this is shown in the corresponding papers. Then we construct the solution by using both the GIT and the HP methods. The latter was already shortly presented in \citep{ItkinMuravey2020}, but for the homogeneous boundary conditions. Also, here we present full derivation of the explicit value of the solution spatial gradient $u_x$ at the lower $x=y(\tau)$ and upper $x=z(\tau)$ boundaries. This derivation differs from that in \citep{LiptonKu2018} (and is closer in sense to \citep{TS1963}), but provides a similar result. Also, all the results obtained in this paper are new.

The novelty of the paper is as follows. First, we construct a semi-analytical solution of the heat equation with two arbitrary moving boundaries and arbitrary time-dependent boundary conditions at these boundaries. To the best of authors' knowledge this problem was not solved yet.

Second, various financial problems, where efficient pricing of double barrier options with rebates at hit is subject of investigation, can be reduced to this setting. As we have mentioned it already in above, they include time-dependent Hull-White and OU models, the time-dependent Black-Scholes model, etc., \citep{CarrItkin2020, ItkinMuravey2020, ItkinLiptonMuravey}. Also, for the CIR and CEV models, where the pricing problem is reduced to solving the Bessel PDE with time-dependent boundaries, the latter can also be solved in a similar manner, \citep{CarrItkinMuravey2020}. Also, local volatility models with $\sigma = \sigma(x)$ can be also treated under this setting.

Third, consider a general one-factor model
\begin{equation} \label{OU}
d S_t = \mu(t,S) dt + \sigma(t,S)dW_t, \qquad S_t(t=0) = S_0.
\end{equation}
\noindent where $t > 0$ is the time, $S_t$ is the spot price, $\mu(t,S)$ is the drift, $\sigma(t,S)$ is the volatility of the process, $W_t$ is  the standard Brownian motion under the risk-neutral measure. This model can be solved as follows. Let us split the  domain of the definition of $S$ into $N$ intervals, and at every interval $i=1,\ldots,N$ approximate the drift by a linear function of $S$, i.e. $\mu_i(t,S) = a_i(t) + b_i(t)S$, and the volatility - by a quadratic function $\sigma_i(t,S) = c_i(t) + d_i(t) S + e_i(t)S^2$. Then it can be shown that at every interval the corresponding pricing PDE can be transformed to the heat equation with time-dependent boundaries and the boundary conditions. Using continuity of the solution and its gradient at every sub-boundary, this problem can be solved semi-analytically in a similar fashion. In physics this approach is called the method of multilayer heat equation, see, eg., a nice survey in \citep{Dias2014}. In more detail the development of this method as applied to finance will be published elsewhere. Thus, solving (semi-analytically) the heat equation with time-dependent moving boundaries and the boundary conditions is a key element of such a method. Having this method in hands, pricing double barrier options for any financial model of the type \eqref{OU} can be done semi-analytically.

The rest of the paper is organized as follows. Section~\ref{statement} describes the double barrier pricing problem
for the time-dependent barriers and rebates at hit and shows that it can be reduced to solving inhomogeneous PDE with homogeneous boundary conditions. Section~\ref{sec:GIT} describes in detail the solution of this problem by using the GIT method. We provide two alternative integral representations of the solution - one via the Jacobi theta functions, and the other one - using the Poisson summation formula. Despite these solutions are equal in a sense of infinite series, their convergence properties are different. A system of the Volterra equations for the gradient of the solution at both boundaries is obtained for both representations. Section~\ref{sec:HP} provides the same development but using the HP method. The final section concludes.

\section{Statement of the problem} \label{statement}

Let us consider a one-factor diffusion model in \eqref{OU} By using a standard argument, to price options written on $S_t$ as an underlying, one can apply the Feynman-Kac theorem to obtain the following partial differential equation (PDE) for, eg., the European Call option price
\begin{equation} \label{PDE}
\fp{C}{t} + \dfrac{1}{2}\sigma^2(t,S) \sop{C}{S} +  \mu(t,S) S \fp{C}{S} = r(t) C.
\end{equation}
Here in case of Equities we treat $S_t$ as the stock price, then $r(t)$ is the deterministic  interest rate. If $S_t$ is the stochastic interest rate, then $r(t)$ in the RHS of \eqref{PDE} should be replaced with $S$.

The \eqref{PDE} should be solved subject to the terminal condition at the option maturity $t=T$
\begin{equation} \label{tc0}
C(T,S) = (S-K)^+,
\end{equation}
\noindent where $K$ is the option strike, and some boundary conditions. Below in this paper we are concentrated only on double barrier options with moving barriers: the lower barrier at $S = L(t)$ and the upper barrier at $S = H(t) > L(t)$, so $S \in [L(t),H(t)]$.

Our main assumption in this paper is that the PDE in \eqref{PDE} by a series of transformations of the dependent variable $C(S,t) \mapsto U(x,\tau)$ and independent variables $S \mapsto x(t,S), \ t \mapsto \tau(t,S)$  can be reduced to the heat equation
\begin{equation} \label{Heat}
\fp{U}{\tau} = \sop{U}{x},
\end{equation}
\noindent which should be solved at the new domain $x \in [y(\tau),z(\tau)], \ \tau \in [0, \tau(0,S_0)]$, subject to the terminal condition
\begin{equation} \label{tc}
U(0,x) = U_0(x),
\end{equation}
\noindent and the boundary conditions
\begin{align} \label{bc}
U(\tau, y(\tau)) &= f^-(\tau), \qquad U(\tau, z(\tau)) = f^+(\tau).
\end{align}
Here $f^\pm(\tau), y(\tau), z(\tau)$ are some continuous functions of time $\tau$. From the financial point of view the problem in \eqref{Heat}, \eqref{tc}, \eqref{bc} (the $\calB$ problem) could be viewed as a pricing problem for double barrier options with the moving lower $y(\tau)$ and upper $z(\tau)$ barriers and the rebates $f^\pm(\tau)$ paid at hit, i.e. when the underlying process hits either the lower or the upper  barrier.

Note, that many well-known financial models fit this framework. For instance, the time dependent OU process used in \citep{CarrItkin2020} to model barrier and American options  is such an example. Also, the time-dependent Hull-White model considered in \citep{ItkinMuravey2020} for pricing barrier options is another example. The number of models that fit this framework could be significantly expanded if one transforms the original PDE in \eqref{PDE} to its multilayer version. This approach is discussed it detail in \citep{ItkinLiptonMuraveyMulti} and will be reported elsewhere.

Below we present solution of the $\calB$ problem by using two analytic methods - the GIT and HP methods. As mentioned in Introduction, the methods are complementary in a sense that despite both solutions are equal, their convergence properties are different. In particular, the GIT method is more accurate at high strikes and maturities while the HP method - at low strikes and maturities.

It is worth mentioning that the $\calB$ problem is with inhomogeneous boundary conditions, hence from the very beginning it is useful to transform it to a similar problem but with homogeneous boundary conditions. This could be done by the change of variables
\begin{align} \label{homTrans}
u(\tau,x) &= U(\tau,x) - A(\tau) - B(\tau) x, \\
A(\tau) &= - \frac{f^+(\tau) y(\tau) - f^-(\tau) z(\tau)}{z(\tau) - y(\tau)}, \qquad
B(\tau) = \frac{f^+(\tau) - f^-(\tau) }{z(\tau) - y(\tau)}, \nonumber
\end{align}
\noindent which transforms the PDE in \eqref{PDE} to the inhomogeneous PDE but with the homogeneous boundary conditions
\begin{align} \label{wEq}
\fp{u}{\tau} &= \sop{u}{x} + g(\tau,x), \\
g(\tau,x) & \equiv - A'(\tau) - B'(\tau) x, \quad (\tau, x) \in \mathbb{R}_+ \times [y(\tau), z(\tau)], \nonumber  \\
u(0,x) &= U_0(x) - A(0) - B(0) x \equiv u_0(x), \qquad u(\tau, y(\tau)) = u(\tau, z(\tau)) = 0. \nonumber
\end{align}

\section{Solution by the GIT method} \label{sec:GIT}

In this section we solve the problem in \eqref{wEq} by using the GIT method, see \citep{kartashov1999, CarrItkin2020, ItkinMuravey2020,ItkinLiptonMuravey} and references therein. However, as mentioned in \citep{kartashov2001}, an analytic solution for the domain with two moving boundaries is yet unknown. Therefore, our solution presented in this Section is new, and it extends the results of \citep{CarrItkin2020} obtained for the domain $[0, y(\tau)]$.

In \citep{CarrItkin2020} the authors used the GIT proposed in \citep{kartashov1999} which is a map
$u(\tau,x) \mapsto \bar{u}(\tau,p)$ of the form
\begin{equation} \label{trCI}
 \bar{u}(\tau,p) = \int_0^{y(\tau)} u(\tau,x) \sinh(x \sqrt{p}) dx,
 \end{equation}
 \noindent where $p = a + \iu \omega$ is a complex number with $\Re(p) \ge \beta > 0$, and $- \frac{\pi}{4} <
 \arg\left(\sqrt{p}\right) < \frac{\pi}{4}$. Here we proceed with a similar idea by introducing the transform
 \begin{equation} \label{eq:GIT_def}
\baru(\tau, p) = \int_{y(\tau)}^{z(\tau)} u(\tau, x) \sinh \left( p[x - y(\tau)] \right)dx.
\end{equation}
With a special choice of the lower boundary $y(\tau) = 0$ this transform replicates that one in \eqref{trCI} subject to the point that here we use the spectral parameter $p$ instead of $\sqrt{p}$ as in \citep{CarrItkin2020}.

Since the kernel of \eqref{eq:GIT_def} is time-dependent it doesn't make much sense to apply this transform directly to the inhomogeneous heat equation in \eqref{wEq}. Therefore, we represent the image $\baru$ as a difference  of two other images
\begin{equation} \label{eq:GIT_def2}
\baru = \frac{1}{2}(\baru_+ - \baru_-), \qquad
\baru_\pm (\tau, p) = \int_{y(\tau)}^{z(\tau)} u(\tau, x) e^{\pm p [x - y(\tau)]}dx.
\end{equation}

To determine $\baru(\tau,p)$ let us multiply both parts of the first line in \eqref{wEq} by $e^{\pm p [x - y(\tau)]}$ and integrate on $x$. These yield
\begin{align} \label{heatTr}
\int_{y(\tau)}^{z(\tau)} & \fp{u(\tau, x)}{\tau} e^{\pm p [x - y(\tau)]}dx  = \fp{\baru_\pm(\tau, p)}{\tau} - u(\tau, z(\tau)) e^{\pm p z(\tau)} z'(\tau) + u(\tau, y(\tau)) e^{ \pm p y(\tau)}y'(\tau) \\
&\pm   p y'(\tau) \int_{y(\tau)}^{z(\tau)} u(\tau, x) e^{\pm p [x - y(\tau)]}dx = \fp{\baru_\pm}{\tau} \pm p y'(\tau) \baru_\pm, \nonumber \\
\int_{y(\tau)}^{z(\tau)} & \sop{u(\tau, x)}{x} e^{\pm p [x - y(\tau)]}dx = \left[\Phi(\tau)- B(\tau)\right] e^{\pm p [z(\tau) - y(\tau)]} + \left[\Psi(\tau) + B(\tau) \right]  + p^2 \baru_\pm(\tau, p), \nonumber \\
\bar{g}_{\pm}(\tau,p) &\equiv \int_{y(\tau)}^{z(\tau)} g(\tau,x) e^{\pm p[x - y(\tau)]} d x = \frac{B'(\tau)}{p^2}
\left( e^{\pm p [z(\tau) - y(\tau)]} - 1 \right) \nonumber \\
&\pm \frac{1}{p} \left[A'(\tau) \left(1 - e^{\pm p [z(\tau) - y(\tau)]}\right) + B'(\tau) \left(y(\tau) -z(\tau) e^{\pm p [z(\tau) - y(\tau)]}\right) \right]. \nonumber
 \end{align}
\noindent where terms proportional to $u(\tau,y(\tau)$ and $u(\tau,z(\tau)$ vanish due to the boundary conditions in \eqref{wEq}, and by definition
\begin{align} \label{PsiPhi_def}
\Psi(\tau) = -\fp{U(\tau, x)}{x} \Bigg|_{x = y(\tau)} &\quad
\Phi(\tau) = \fp{U(\tau, x)}{x} \Bigg|_{x = z(\tau)}.
\end{align}

Collecting terms in \eqref{heatTr} yields two initial value problems, one for the function $\baru_+$ and the other one - for $\baru_-$
\begin{align} \label{baru_pm_ODE}
\fp{\baru_\pm(\tau,p)}{\tau} &+ \baru_\pm \left[ \pm p y'(\tau) - p^2\right]
=  \left[\Psi(\tau) + B(\tau)\right] + \left[\Phi(\tau) - B(\tau)\right] e^{\pm p[z(\tau) - y(\tau)]} + \bar{g}_{\pm}(\tau,p), \\
\baru_\pm(0, p) &= \int_{y(0)}^{z(0)} u(0, x) e^{\pm p[x - y(0)]} dx. \nonumber
\end{align}

Each problem in \eqref{baru_pm_ODE} (for the plus and minus signs) can be solved explicitly
\begin{align} \label{baru_pm_sols}
\baru_\pm(\tau, p) &= e^{p^2 \tau} \int_{y(0)}^{z(0)} u(0, x) e^{\pm p[x - y(\tau)]} dx \\
&+\int_0^\tau e^{p^2 (\tau - s)} \left[ \left[\Phi(s) - B(s) \right] e^{\pm p[z(s) - y(\tau)]} + \left(\Psi(s) +B(s) + \bar{g}_\pm(s,p) \right)
e^{\pm p[y(s) - y(\tau)]}\right] ds. \nonumber
\end{align}

Note that the last term in the second integral in \eqref{baru_pm_sols} can be re-written in a more convenient form
\begin{align*}
\bar{g}_{\pm}(s,p) & e^{\pm p \left[ y(s) - y(\tau)\right]} = \frac{B'(s)}{p^2}
\left( e^{\pm p [z(s) - y(s)]} - 1 \right)  e^{\pm p \left[ y(s) - y(\tau)\right]} \\
&\pm \frac{1}{p} \left[A'(s) \left(1 - e^{\pm p [z(s) - y(s)]}\right)  e^{\pm p \left[ y(s) - y(\tau)\right]} + B'(s) \left(y(s) -z(s) e^{\pm p [z(s) - y(s)]}\right)   e^{\pm p \left[ y(s) - y(\tau)\right]}\right] \\
&= \frac{B'(s)}{p^2} \left( e^{\pm p [z(s) - y(\tau)]} - e^{\pm p [y(s) - y(\tau)]} \right) \\
&\pm \frac{1}{p} \left[A'(s) \left(e^{\pm p [y(s) - y(\tau)]} - e^{\pm p [z(s) - y(\tau)]}\right) + B'(s) \left(y(s)e^{\pm p [y(s) - y(\tau)]}   -z(s) e^{\pm p [z(s) - y(\tau)]}\right) \right].
\end{align*}

The explicit representation for $\baru$ then follows from its definition in \eqref{eq:GIT_def2}
\begin{align} \label{baru_sol}
\baru(\tau, p) &= e^{p^2 \tau} \int_{y(0)}^{z(0)} u(0, x) \sinh\left(p[x - y(\tau)] \right) dx \\
&+ \int_0^\tau e^{p^2 (\tau - s)} \left[ \left[\Phi(s) -B(s) \right]  \sinh\left(p[z(s) - y(\tau)]\right) + \left[\Psi(s) + B(s) \right] \sinh(p[y(s) - y(\tau)]) + h(s,p) \right] ds, \nonumber \\
h(s,p) &= \frac{B'(s)}{p^2} \left[\sinh(p [z(s)-y(\tau)]) - \sinh(p[y(s)-y(\tau)]) \right] \nonumber \\
&+  \frac{1}{p}\left[ \left(A'(s) + B'(s) y(s)\right) \cosh(p [y(s)-y(\tau)])
- \left(A'(s) + B'(s) z(s)\right) \cosh(p [z(s) - y(\tau)])  \right]. \nonumber
\end{align}

\subsection{The inverse transform}

General theory of the heat equation tells us that the solution at the space domain $a< x < b, \ a,b \in \Re - const$, can be represented as Fourier series of the form,  \citep{Polyanin2002})
\[
u(\tau, x) = \sum_{n = 1}^{\infty}\alpha_n e^{- \frac{\pi^2 n ^2}{(b- a)^2}\tau }
\sin \left( \frac{\pi n (x - a)}{b- a}\right)
\]
Therefore, by analogy let us look for the inverse transform of $\baru$ (which actually is the solution $u(\tau, x)$ of \eqref{wEq}) to be a generalized Fourier transform of the form (\cite{CarrItkin2020})
\begin{equation} \label{series_u_def}
u(\tau, x) = \sum_{n =0}^{\infty} A_n(\tau) \sin\left( \pi n \frac{x- y(\tau)}{z(\tau) - y(\tau)}\right),
\end{equation}
\noindent where $A_n(\tau)$ are some yet unknown Fourier coefficients (weights). Applying the direct transform in \eqref{eq:GIT_def} to the series in \eqref{series_u_def} yields
\begin{equation} \label{baru_series}
\baru(\tau, x) = \int_{y(\tau)}^{z(\tau)} \sum_{n =1}^{\infty} A_n(\tau) \sin\left( \pi n \frac{x- y(\tau)}{z(\tau) - y(\tau)}\right) \sinh\left(p [x - y(\tau)]\right) dx.
\end{equation}

Using the identity
\begin{equation}
\int_{y}^{z} \sin \left(\pi n \frac{x - y }{z-y}\right) \sinh \left(p [x - y]\right) dx
= (-1)^{n + 1} \frac{\pi n (z - y) \sinh \left(p [z - y]\right)  }{n^2 \pi^2 + p^2 (z- y )^2},
\end{equation}
\noindent we obtain another representation for $\baru$
\begin{equation} \label{baru_sol_series}
\baru(\tau, x) = \frac{1}{l(\tau)} \sum_{n = 1}^{\infty}\frac{ (-1)^{n + 1} \pi n A_n(\tau) \sinh \left( p l(\tau)\right)}{ \left[p + \iu n \pi/l(\tau)\right] \left[p - \iu n \pi/l(\tau) \right]  }, \qquad
l(\tau) = z(\tau) - y(\tau).
\end{equation}

Combining \eqref{baru_sol_series} and \eqref{baru_sol} yields the equation for $A_n(\tau)$
\begin{align} \label{ID_for_residuals}
\frac{1}{l(\tau)} \sum_{n = 1}^{\infty} & \frac{ (-1)^{n + 1} \pi n A_n(\tau)}{ \left[p + \iu n \pi/l(\tau)\right] \left[p - \iu n \pi/l(\tau) \right] } =  \frac{1}{\sinh \left( p\, l(\tau)\right)} \Bigg \{e^{p^2 \tau} \int_{y(0)}^{z(0)} u(0, x) \sinh\left(p[x - y(\tau)] \right) dx \\
+&\int_0^\tau e^{p^2 (\tau - s)} \left[ \left[\Phi(s) - B(s)\right]  \sinh\left(p[z(s) - y(\tau)]\right) + \left[\Psi(s) + B(s) \right]  \sinh\left(p[y(s) - y(\tau)]\right)
+ h(s,p) \right] ds \Bigg\}. \nonumber
\end{align}

The LHS and RHS of \eqref{ID_for_residuals} as the functions of $p$ are analytic in the whole complex plane domain except the poles
\begin{equation}
p_{k}^{\pm} = \pm \iu \pi k/l(\tau), \quad k=1,2,\ldots,
\end{equation}
\noindent because $h(s,p)$ is regular and well-behaved at $p \to 0$.  Also, as this is easy to check, these poles are common for the LHS and RHS of \eqref{ID_for_residuals}. For what follows we need the following residues
\begin{equation} \label{residual_values}
\res_{p = p_k^\pm} \sum_{n = 1}^{\infty} \frac{1}{ \left[p + \iu n \pi/l(\tau)\right] \left[p - \iu n \pi/l(\tau) \right] } = \pm \frac{l(\tau)}{2\iu \pi k}, \qquad \qquad
\res_{p = p_k^\pm}  \frac{1}{\sinh\left(p\, l(\tau)\right)} = \frac{(-1)^k}{l(\tau)}.
\end{equation}

The Fourier coefficients $A_k(\tau)$  can now be found from \eqref{ID_for_residuals} by applying contour integration on $p$ to both sides. We integrate using the contours  $L_k^+, \ k=1,2,\ldots$, where the integration contours look like it is depicted in Fig.~\ref{contour}. Thus, we have
\begin{align} \label{cInt}
\frac{1}{l(\tau)}\int\displaylimits_{L_k^+} &\sum_{n = 1}^{\infty}\frac{ (-1)^{n + 1} \pi n A_n(\tau)}{ \left[p + \iu n \pi/l(\tau)\right] \left[p - \iu n \pi/l(\tau) \right] } dp = \int\displaylimits_{L_k^+} \frac{1}{\sinh \left( p l(\tau)\right)} \Bigg\{ e^{p^2 \tau} \int_{y(0)}^{z(0)} u(0, x) \sinh\left(p[x - y(\tau)] \right) dx \nonumber \\
&+ \int_0^\tau e^{p^2 (\tau - s)} \left[ \Phi(s)  \sinh\left(p[z(s) - y(\tau)]\right) + \Psi(s)  \sinh\left(p[y(s) - y(\tau)]\right) + h(s,p) \right] ds \Bigg\} dp.
\end{align}

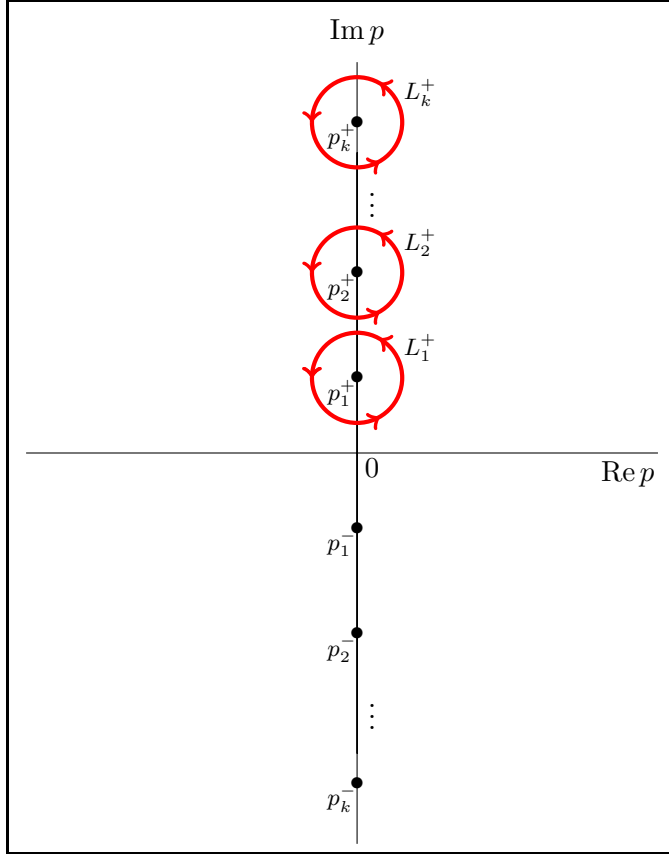
\begin{figure}[!hb]
\hspace{-0.5in}
\captionsetup{width=0.8\linewidth}	
\begin{center}
\caption{Contours of integration of \eqref{ID_for_residuals} in the complex plane $p \in \mathbb{C}$ with poles at $p_1^\pm, p_2^\pm, \ldots$.}
\label{contour}
\fbox{
			\begin{tikzpicture}
			\begin{scope}[scale=4.]
			\def\bigradius{1}
			\def\axisradius{1}
			\def\gammaradius{0}
			\def\omegaradius{0}
			\def\lradius{0.15}
			\def\pone{0.25}
			\def\ptwo{0.6}
			\def\pthird{1.1}
			\def\dr{1 / 3 /3}
			\draw (-1.1*\axisradius, 0) -- (1.*\axisradius,0)
			(\omegaradius, -1.3*\axisradius) -- (\omegaradius, 1.3*\axisradius)
			(\gammaradius, -\bigradius) -- (\gammaradius, \bigradius);
			
			\draw[red, ultra thick, decoration={ markings,
				 mark=at position \dr * 0 with {\arrow{>}}
				,mark=at position \dr * 1 with {\arrow{>}}
				,mark=at position \dr * 2 with {\arrow{>}}
				,mark=at position \dr * 3 with {\arrow{>}}
				,mark=at position \dr * 4 with {\arrow{>}}
				,mark=at position \dr * 5 with {\arrow{>}}
				,mark=at position \dr * 6 with {\arrow{>}}
				,mark=at position \dr * 7 with {\arrow{>}}
				,mark=at position \dr * 8 with {\arrow{>}}
				,mark=at position \dr * 9 with {\arrow{>}}
			},
			postaction={decorate}]
			let in
    			(- \lradius,\pone ) arc(-180:180:\lradius) 			
    			(- \lradius,\ptwo ) arc(-180:180:\lradius) 			
    			(- \lradius,\pthird ) arc(-180:180:\lradius) ;			
			\node at (0.9*\axisradius,-0.07){$\operatorname{Re} p$};
			\node at (0,1.4*\axisradius) {$\operatorname{Im} p$};

			\node at (0.05,-0.05){$0$};
			\node at (0, \pone){$\bullet$};
			\node at (0, -\pone){$\bullet$};
			\node at (0, \ptwo){$\bullet$};
			\node at (0, -\ptwo){$\bullet$};
			\node at (0, \pthird){$\bullet$};
			\node at (0, -\pthird){$\bullet$};
            \node at (0.05, -\pthird + 0.25){$\vdots$};
            \node at (0.05, \pthird - 0.25){$\vdots$};

			\node at (-0.05, \pone -0.05){{\footnotesize $p_1^+$}};
			\node at (-0.05, -\pone -0.05){{\footnotesize $p_1^-$}};
			\node at (0.06 +\lradius, \pone +0.1){{\footnotesize $L_1^+$}};
			\node at (-0.05, \ptwo -0.05){{\footnotesize $p_2^+$}};
			\node at (-0.05, -\ptwo -0.05){{\footnotesize $p_2^-$}};
			\node at (0.06 +\lradius, \ptwo +0.1){{\footnotesize $L_2^+$}};
			\node at (-0.05, \pthird -0.05){{\footnotesize $p_k^+$}};
    		\node at (-0.05, -\pthird -0.05){{\footnotesize $p_k^-$}};
			\node at (0.06 +\lradius, \pthird +0.1){{\footnotesize $L_k^+$}};

			\end{scope}
			\end{tikzpicture}
		}
	\end{center}
\end{figure}

By the Cauchy's residue  theorem each integral in \eqref{cInt} is equal to the sum of the corresponding residues  that can be computed with the help of \eqref{residual_values}. This yields the following formula for $A_k(\tau)$
\begin{equation} \label{ResIntPlus}
A_k(\tau) =  \frac{2}{\iu l(\tau)} \baru\left(\tau, \iu \frac{\pi k}{l(\tau)} \right).
\end{equation}

With allowance for \eqref{baru_sol} this can be finally represented as
\begin{align} \label{A_k_formula}
A_k(\tau) &= \frac{2}{l(\tau)} \Bigg\{
e^{-\frac{\pi^2 k^2}{l^2(\tau)} \tau} \int_{y(0)}^{z(0)} u(0, x) \sin\left(\frac{\pi k}{l(\tau)}[x - y(\tau)] \right) dx \\
&+ \int_0^\tau e^{-\frac{\pi^2 k^2}{l^2(\tau)} (\tau - s)} \bigg[\left[\Phi(s) - B(s)\right] \sin\left(  \frac{\pi k}{l(\tau)} [z(s) - y(\tau)]\right) \nonumber \\
&+ \left[\Psi(s) +B(s)\right]  \sin\left( \frac{\pi k}{l(\tau)}[y(s) - y(\tau)]\right) + h_1(k, s,\tau) \bigg] ds  \Bigg\}, \nonumber
\end{align}
\noindent with
\begin{align} \label{h1}
&h_1(k, s,\tau) = -\frac{B'(s) l^2(\tau)}{\pi^2 k^2} \left[\sin\left( \frac{\pi k }{l(\tau)} [z(s)-y(\tau)] \right) - \sin\left( \frac{\pi k }{l(\tau)}[y(s)-y(\tau)]\right) \right] \\
&-\frac{l(\tau)}{\pi k}\Bigg[ \left(A'(s) + B'(s) y(s)\right) \cos\left( \frac{\pi k }{l(\tau)} [y(s)-y(\tau)]\right) - \left(A'(s) + B'(s) z(s)\right) \cos\left( \frac{\pi k }{l(\tau)} [z(s) - y(\tau)]\right)  \Bigg]. \nonumber
\end{align}
Keeping in mind that
\begin{equation*}
A(\tau) + B(\tau) y(\tau) = f^-(\tau) \qquad A(\tau) + B(\tau) z(\tau) = f^+(\tau)
\end{equation*}
\noindent we re-arrange \eqref{h1} as
\begin{align} \label{h12}
&h_1(k, s,\tau) = -\frac{B'(s) l^2(\tau)}{\pi^2 k^2} \left[\sin\left( \frac{\pi k }{l(\tau)} [z(s)-y(\tau)] \right) - \sin\left( \frac{\pi k }{l(\tau)}[y(s)-y(\tau)]\right) \right] \\
&- \frac{l(\tau)}{\pi k}\Bigg[ \left((f^-)'(s) - B(s)y'(s)\right) \cos\left( \frac{\pi k }{l(\tau)} [y(s)-y(\tau)]\right) - \left((f^+)'(s) - B(s) z'(s)\right) \cos\left( \frac{\pi k }{l(\tau)} [z(s) - y(\tau)]\right)  \Bigg]. \nonumber
\end{align}
Substituting this result into \eqref{series_u_def}, we obtain the solution $u(\tau, x)$ of the problem \eqref{wEq}
\begin{align} \label{u_final}
u(\tau, &x) = \frac{2}{l(\tau)}  \sum_{n= 1}^{\infty} \sin\left(\pi n  \frac{x - y(\tau)}{l(\tau)}\right) \Bigg\{
e^{-\frac{\pi^2 n^2}{l^2(\tau)} \tau} \int_{y(0)}^{z(0)} u(0, \xi) \sin\left(\frac{\pi n}{l(\tau)}[\xi - y(\tau)] \right) d\xi \\
&+ \int_0^\tau e^{-\frac{\pi^2 n^2}{l^2(\tau)} (\tau - s)} \bigg[\left[\Phi(s) - B(s)\right]  \sin\left(  \frac{\pi n}{l(\tau)} [z(s) - y(\tau)]\right) + \left[\Psi(s) + B(s)\right]  \sin\left( \frac{\pi n}{l(\tau)}[y(s) - y(\tau)]\right)  \bigg] ds  \nonumber \\
& + \int_0^\tau e^{-\frac{\pi^2 n^2}{l^2(\tau)} (\tau - s)} h_1(n,s,\tau) ds \Bigg\}. \nonumber
\end{align}

This expression can be further simplified, see Appendix~\ref{App1}.  Returning back  to the original variable $U(\tau, x)$ yields the final representation
\begin{align} \label{U_final}
U(\tau, x) &= \frac{2}{l(\tau)}  \sum_{n= 1}^{\infty} \sin\left(\pi n  \frac{x - y(\tau)}{l(\tau)}\right) \Bigg\{ e^{-\frac{\pi^2 n^2}{l^2(\tau)} \tau} \int_{y(0)}^{z(0)} U(0, \xi) \sin\left(\frac{\pi n}{l(\tau)}[\xi - y(\tau)] \right) d\xi \\
&+ \int_0^\tau e^{-\frac{\pi^2 n^2}{l^2(\tau)} (\tau - s)} \Big[ \Phi(s)  \sin\left(  \frac{\pi n}{l(\tau)} [z(s) - y(\tau)]\right) + \Psi(s)  \sin\left( \frac{\pi n}{l(\tau)}[y(s) - y(\tau)]\right) \nonumber \\
&+ \beta(\tau, s, n) \Big] ds  \Bigg\} + F(\tau,x). \nonumber
\end{align}
\noindent where $\beta(\tau,  s, n)$ and $F(\tau,x)$ are defined in \eqref{alpha_and_beta_def} and \eqref{Fdef_alt}. Also, as can be checked from the definition in \eqref{Fdef_alt} that  at $y(\tau) < x < z(\tau)$ the function $F(\tau,x)$ vanishes, and $F(\tau,y(\tau)) = f^-(\tau), \ F(\tau,z(\tau)) = f^+(\tau)$. Thus, \eqref{U_final} solves the problem in \eqref{Heat} with the initial condition in \eqref{tc} and the boundary conditions in \eqref{bc}.

It is worth mentioning that the exact same formalism can be developed by using another integral transform
\begin{equation*}
\baru(\tau, p) = \int_{y(\tau)}^{z(\tau)} \sinh\left(p[z(\tau) - x]\right) u(\tau, x) dx,
\end{equation*}
\noindent with the result being same as in \eqref{u_final}.

\subsection{Connection to the Jacobi theta function}

As observed in \citep{CarrItkin2020}, the sums in \eqref{u_final} could be expressed via the Jacobi theta functions of the third kind, \citep{mumford1983tata}\footnote{Which is not a surprise since it is known that the Jacobi theta functions is the solution of the heat equation with periodic boundary conditions. As applied to the problem considered in this paper, an example is a double barrier option with zero rebate at hit.}. Using their definition
\begin{equation} \label{theta_def}
\theta_3(z,\omega) = 1 + 2\sum_{n = 1}^{\infty} \omega^{n^2} \cos \left(2 n z \right),
\end{equation}
\noindent and the identities
\begin{align} \label{theta_integrals}
\fp{\theta_3(z,\omega)}{z} = \theta_3'(z,\omega) = -4 \sum_{n = 1} ^{\infty}
n \omega^{n^2} \sin \left( 2 n z\right).
\end{align}
\noindent we obtain from \eqref{u_final}
\begin{alignat}{2} \label{thetaRepr}
&4\sum_{n = 1}^{\infty} e^{-\frac{\pi^2 n^2}{l^2(\tau)} \tau} \sin \left(  \frac{n \pi  (x - y(\tau))}{l(\tau)}\right) \sin \left(  \frac{n \pi  (\xi - y(\tau))}{l(\tau)}\right) &&= \theta_3(\phi_-(x,\xi), \omega_1) -  \theta_3(\phi_+(x,\xi), \omega_1), \\
&4\sum_{n = 1}^{\infty} e^{-\frac{\pi^2 n^2}{l^2(\tau)} (\tau - s)} \sin \left(  \frac{n \pi  (x - y(\tau))}{l(\tau)}\right) \sin \left(  \frac{n \pi  (\xi - y(\tau))}{l(\tau)}\right) &&= \theta_3(\phi_-(x,\xi), \omega_2) -  \theta_3(\phi_+(x,\xi), \omega_2), \nonumber \\
&8\sum_{n = 1}^{\infty} n e^{-\frac{\pi^2 n^2}{l^2(\tau)} (\tau - s)} \sin \left(  \frac{n \pi  (x - y(\tau))}{l(\tau)}\right) \cos \left(  \frac{n \pi  (\xi - y(\tau))}{l(\tau)}\right) &&= -\left( \theta_3'(\phi_-(x,\xi),\omega_2) + \theta_3'(\phi_+(x,\xi), \omega_2) \right). \nonumber
\end{alignat}
\begin{equation*}
\omega_1 = e^{-\frac{\pi^2 \tau}{l^2(\tau)}}, \quad \omega_2 = e^{-\frac{\pi^2 (\tau - s)}{l^2(\tau)}}, \quad
\phi_-(x,\xi) = \frac{\pi (x - \xi)}{2 l(\tau)}, \quad \phi_+(x,\xi) = \frac{\pi (x + \xi - 2y(\tau))}{2 l(\tau)}.
\end{equation*}

With the help of \eqref{thetaRepr} the final formula for $u(\tau,x)$ simplifies
\begin{align}  \label{u_final2}
2 l(\tau) \Big[U(\tau, x) &- F(\tau,x)\Big] = \int_{y(0)}^{z(0)} U(0, \xi) \left[ \theta_3(\phi_-(x,\xi), \omega_1) -  \theta_3(\phi_+(x,\xi), \omega_1) \right]  d\xi \\
&+\int_0^\tau \Bigg\{ \left[\Psi(s)- f^-(s) y'(s)\right] \left[ \theta_3(\phi_-(x,y(s)), \omega_2) -  \theta_3(\phi_+(x,y(s)), \omega_2) \right] \nonumber \\
&+ \left[\Phi(s)+ f^+(s) z'(s)\right] \left[ \theta_3(\phi_-(x,z(s)), \omega_2) -  \theta_3(\phi_+(x,z(s)), \omega_2) \right] \nonumber \\
&+ \frac{1}{2} \Big[ f^+(s) \left[\theta_3'(\phi_-(x,z(s)), \omega_2) + \theta_3'(\phi_+(x,z(s)),\omega_2)\right] \nonumber \\
&- f^-(s)\left[\theta_3'(\phi_-(x,y(s)), \omega_2) + \theta_3'(\phi_+(x,y(s)),\omega_2) \right] \Big]  \Bigg\} ds. \nonumber
\end{align}

Note, that if rebates at hit are not paid, the boundary conditions become homogeneous, and all terms proportional to $f^-(s) = f^+(s) = 0$ in \eqref{u_final2} disappear.

\subsection{Determining  $\Psi(\tau)$ and $\Phi(\tau)$}

Taking the derivative in \eqref{u_final2} with respect to $x$, having in mind that according to \eqref{theta_integrals}
\begin{align}
\fp{\theta_3(\phi_\pm(x,\xi),\omega_2)}{x} &= \frac{\pi}{l(\tau)} \fp{\theta_3(y,\omega_2)}{y}\Bigg|_{y = \phi_\pm(x,\xi)} = \frac{\pi}{l(\tau)}  \theta_3'(\phi_\pm(x,\xi),\omega_2),  \\
\sop{\theta_3(\phi_\pm(x,\xi),\omega_2)}{x} &= \frac{\pi^2}{l^2(\tau)} \sop{\theta_3(y,\omega_2)}{y}\Bigg|_{y = \phi_\pm(x,\xi)} =
\frac{\pi^2}{l^2(\tau)} \theta_3''(\phi_\pm(x,\xi),\omega_2), \nonumber
\end{align}
\noindent  and substituting $x = y(\tau)$ and $x = z(\tau)$, we get a system of Volterra integral equations of the second kind to determine $\Psi(\tau, \Phi(\tau)$
\begin{align}  \label{VolterraTheta}
-\frac{2 l^2(\tau)}{\pi} \Big[\Psi(\tau) &+ F_x(\tau,y(\tau))\Big] = \int_{y(0)}^{z(0)} U(0, \xi) \left[ \theta'_3(\phi_-(y(\tau),\xi), \omega_1) -  \theta'_3(\phi_+(y(\tau),\xi), \omega_1) \right]  d\xi \\
&+\int_0^\tau \Bigg\{ \left[\Psi(s) - f^-(s) y'(s)\right] \left[ \theta'_3(\phi_-(y(\tau),y(s)), \omega_2) -  \theta'_3(\phi_+(y(\tau),y(s)), \omega_2) \right] \nonumber \\
&+ \left[\Phi(s) + f^+(s) z'(s)\right] \left[ \theta'_3(\phi_-(y(\tau),z(s)), \omega_2) -  \theta'_3(\phi_+(y(\tau),z(s)), \omega_2) \right] \nonumber \\
&+ \frac{2 \pi}{l(\tau)} \Big[ f^+(s) \left[\theta_3''(\phi_-(y(\tau), z(s)), \omega_2) + \theta_3''(\phi_+(y(\tau), z(s)),\omega_2)\right] \nonumber \\
&- f^-(s)\left[\theta_3''(\phi_-(y(\tau),y(s)), \omega_2) + \theta_3''(\phi_+(y(\tau), y(s)),\omega_2) \right] \Big]  \Bigg\} ds. \nonumber \\
\frac{2 l^2(\tau)}{\pi} \Big[\Phi(\tau) &+ F_x(\tau,z(\tau))\Big] = \int_{y(0)}^{z(0)} U(0, \xi) \left[ \theta'_3(\phi_-(z(\tau),\xi), \omega_1) -  \theta'_3(\phi_+(z(\tau),\xi), \omega_1) \right]  d\xi \nonumber  \\
&+\int_0^\tau \Bigg\{ \left[\Psi(s) - f^-(s) y'(s)\right] \left[ \theta'_3(\phi_-(z(\tau),y(s)), \omega_2) -  \theta'_3(\phi_+(z(\tau),y(s)), \omega_2) \right] \nonumber \\
&+ \left[\Phi(s) + f^+(s) z'(s)\right] \left[ \theta'_3(\phi_-(z(\tau),z(s)), \omega_2) -  \theta'_3(\phi_+(z(\tau),z(s)), \omega_2) \right] \nonumber \\
&+ \frac{2 \pi}{l(\tau)} \Big[ f^+(s) \left[\theta_3''(\phi_-(z(\tau), z(s)), \omega_2) + \theta_3''(\phi_+(z(\tau), z(s)),\omega_2)\right] \nonumber \\
&- f^-(s)\left[\theta_3''(\phi_-(z(\tau),y(s)), \omega_2) + \theta_3''(\phi_+(z(\tau), y(s)),\omega_2) \right] \Big]  \Bigg\} ds. \nonumber
\end{align}

Also, since the theta function $\theta_3(z,\omega)$ solves the heat equation
\begin{equation*}
\fp{\theta_3(z,\iu t)}{t} =  \frac{1}{4\pi} \sop{\theta_3(z,\iu t)}{z},
\end{equation*}
\noindent the second derivatives with respect to the first argument could be expressed via the first derivatives with respect to the second argument.

However, there exists a problem with the representation in \eqref{VolterraTheta}. Indeed, using the definition of $F(\tau,x)$ in \eqref{Fdef_alt} it can be checked that the derivatives $F_x(\tau,x)$ do not exist at $x=y(\tau)$ and $x=z(\tau)$ as they are proportional to the Dirac Delta $\delta(0)$. Therefore, in the next Section we attack this problem again using an alternative representation of the solution.

\subsection{The Poisson summation formula and alternative representations}
\label{sec:Poisson}
It is known that for the fixed spatial domain $x \in [y(\tau), z(\tau)], \ y(\tau) = 0, \ z(\tau) = const$ there exist two representations of the solution of the heat equation: one - obtained by using the method of images, and the other one - by the Fourier series. Both solutions are equal in a sense of infinite series, but their convergence properties are different, see eg., \citep{Lipton2002}. It turns out that for a curvilinear strip we can also obtain an alternative representation.

The solution $u(\tau,x)$ found in \eqref{u_final} already has the form of the Fourier series. However, applicability of the method of images for the problem \eqref{wEq} is not transparent due to time-dependency of the boundaries. Instead, we can find an alternative representation by using the following property known as the Poisson Summation formula, \citep{PolBremer1950}
\begin{proposition}[Poisson Summation formula] 	\label{prop:poisson}
Let $\hat{h}(\nu)$ be the Fourier transform of the appropriate function $h(x)$
\begin{equation*}
\hat{h}(\nu) = \int_{-\infty} ^{\infty} h(x) e^{-2\pi \iu \nu x} dx.
\end{equation*}
The following identity holds
\begin{equation} \label{poisson_sums}
\sum_{n = -\infty}^{\infty} h(n) = \sum_{k = -\infty}^{\infty} \hat{h}(k).
\end{equation}
\end{proposition}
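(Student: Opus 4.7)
The plan is the classical periodization argument. First, I would introduce the $1$-periodic function
\[
P(x) := \sum_{n \in \mathbb{Z}} h(x+n),
\]
which, under the implicit regularity/decay assumptions encoded in the phrase ``appropriate function'' (for instance $h$ sufficiently smooth with $|h(x)|, |\hat{h}(\nu)| = O((1+|x|)^{-1-\varepsilon})$ as $|x|,|\nu| \to \infty$), is a well-defined continuous function on $\mathbb{R}/\mathbb{Z}$.

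Next, expand $P$ in its Fourier series $P(x) = \sum_{k \in \mathbb{Z}} c_k\, e^{2\pi \iu k x}$ and identify the coefficients with $\hat{h}(k)$. Swapping summation and integration (justified by absolute convergence of the defining series for $P$ on $[0,1]$),
\begin{align*}
c_k &= \int_0^1 P(x)\, e^{-2\pi \iu k x}\, dx
     = \sum_{n \in \mathbb{Z}} \int_0^1 h(x+n)\, e^{-2\pi \iu k x}\, dx \\
    &= \sum_{n \in \mathbb{Z}} \int_n^{n+1} h(y)\, e^{-2\pi \iu k y}\, dy
     = \int_{-\infty}^{\infty} h(y)\, e^{-2\pi \iu k y}\, dy
     = \hat{h}(k),
\end{align*}
where the factor $e^{2\pi \iu k n} = 1$ (due to $k,n \in \mathbb{Z}$) allows one to shift the dummy variable in each integral and glue the pieces into a single integral over $\mathbb{R}$. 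Evaluating the Fourier expansion at $x = 0$ then gives $P(0) = \sum_{n} h(n)$ on one side and $\sum_{k} c_k = \sum_{k} \hat{h}(k)$ on the other, which is precisely the identity \eqref{poisson_sums}.

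The genuine content of the proof lies in verifying that $P$ actually equals its Fourier series pointwise at the origin, and this is where the ``appropriate function'' hypothesis is used. Absolute convergence of $\sum_{n} h(x+n)$ yields continuity of $P$, while absolute convergence of $\sum_{k} \hat{h}(k)$ forces the Fourier series to converge uniformly and therefore to coincide with the continuous function $P$ at every point, in particular at $x = 0$. For the integrands that arise later in this paper, namely Gaussians in the spectral variable multiplied by smooth bounded factors of $s$ and $\tau$, both decay conditions are comfortably satisfied, so \eqref{poisson_sums} will apply to the sums in \eqref{u_final} without further qualification.
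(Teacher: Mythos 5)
Your argument is correct, but note that it is not a variant of the paper's proof, because the paper gives no proof at all: Proposition~\ref{prop:poisson} is disposed of with a bare reference to \citep{PolBremer1950}. What you supply is the standard periodization argument: form $P(x)=\sum_{n} h(x+n)$, unfold $\int_0^1 P(x)e^{-2\pi\iu kx}\,dx$ into $\int_{-\infty}^{\infty} h(y)e^{-2\pi\iu ky}\,dy$ using $e^{2\pi\iu kn}=1$, and evaluate the resulting Fourier expansion at $x=0$. This is self-contained and has the merit of making explicit the hypotheses hidden in the paper's word ``appropriate'' (absolute convergence of both $\sum_n h(x+n)$ and $\sum_k \hat h(k)$, ensured by the $O((1+|x|)^{-1-\varepsilon})$ decay you impose on $h$ and $\hat h$). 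The one step you pass over lightly is why the Fourier series of $P$ sums to $P$ at $x=0$: absolute summability of the $\hat h(k)$ gives a uniformly convergent trigonometric series whose sum is continuous and has the same Fourier coefficients as $P$, and one then invokes uniqueness of Fourier coefficients for continuous periodic functions to identify the two; this is routine but deserves a sentence. Your closing remark is also apposite for the way the paper uses the result: the functions fed into \eqref{poisson_sums} in \eqref{Summation_formula_main_eq} are Gaussians in the summation variable times bounded trigonometric factors, hence of Schwartz type, so both decay hypotheses hold comfortably. In short, you prove what the paper merely cites; your route buys self-containedness and explicit hypotheses at the cost of a paragraph, while the paper's citation buys brevity.
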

\begin{proof}
  See \citep{PolBremer1950}.
\end{proof}

Applying \eqref{poisson_sums} to the functions
\begin{alignat*}{2}
h_1(x) &= e^{- \frac{\pi^2 x^2}{2}} \cos \left( \pi x \alpha \right), \qquad
&&\hat{h}_1(\nu) =  \int_{-\infty}^{\infty} e^{- \frac{\pi^2 x^2}{2\beta} - 2\pi \iu \nu x} \cos \left( \pi x \alpha \right) dx, \\
h_2(x) &=  x e^{- \frac{\pi^2 x^2}{2\beta}} \sin \left( \pi x \alpha \right), \qquad
&&\hat{h}_2(\nu) =  \int_{-\infty}^{\infty}  x e^{- \frac{\pi^2 x^2}{2\beta} - 2\pi \iu \nu x} \sin \left( \pi x \alpha \right) dx,
\end{alignat*}
\noindent we obtain the following identities
\begin{align} \label{Summation_formula_main_eq}
\sum_{n = -\infty}^{\infty}  e^{-\frac{\pi^2 n^2 }{2 \beta}} \cos \left( \pi n \alpha \right)
&= \sqrt{\frac{2 \beta} {\pi}} e^{- \frac{\alpha ^2 \beta}{ 2} }  \sum_{n = -\infty}^{\infty} e^{-2n^2 \beta } \cosh \left(2 n \alpha \beta \right) \\
&= \sqrt{\frac{\beta} {2\pi}} \sum_{n = -\infty}^{\infty} \left[ e^{-\frac{ \beta}{2} \left(2n - \alpha \right)^2} + e^{-\frac{ \beta }{2} \left(2n + \alpha \right)^2} \right] = 2\sqrt{\frac{\beta} {2\pi}} \sum_{n = -\infty}^{\infty} e^{-\frac{ \beta }{2} \left(2n + \alpha \right)^2}, \nonumber \\
\sum_{n = -\infty}^{\infty}  \pi n e^{-\frac{\pi^2 n^2 }{2 \beta}} \sin \left( \pi n \alpha \right)
&= \frac{ \beta^{3/2}}{\sqrt{2 \pi }} \sum_{n = -\infty}^{\infty} e^{-\frac{ \beta }{2} \left(2n + \alpha \right)^2} \left[\alpha + 2 n  + (\alpha - 2 n )e^{4 \alpha \beta n} \right] \nonumber \\
&= \frac{ \beta^{3/2}}{\sqrt{2 \pi }} \sum_{n = -\infty}^{\infty} \left[ e^{-\frac{ \beta }{2} \left(2n + \alpha \right)^2} (\alpha + 2 n)  + e^{-\frac{ \beta }{2} \left(2n - \alpha \right)^2} (\alpha - 2 n ) \right] \nonumber \\
&=2 \frac{ \beta^{3/2}}{\sqrt{2 \pi }} \sum_{n = -\infty}^{\infty} e^{-\frac{ \beta }{2} \left(2n + \alpha \right)^2} (\alpha + 2 n). \nonumber
\end{align}

Since each summand in \eqref{U_final} can be represented in the form of the LHS of  \eqref{Summation_formula_main_eq}, by using a simple trigonometric formula for the product of sines, we immediately arrive at another form of $U(\tau, x)$, see Appendix~\ref{App2}
\begin{align}  \label{U_final_alt}
U(\tau, x) &= \sum_{n=-\infty}^{\infty} \Bigg \{ \int_{y(0)}^{z(0)} U(0, \xi) \Upsilon_n(x, \tau  \,|\, \xi, 0) d\xi +\int_0^\tau \left[\Phi(s) + f^{+}(s) z'(s) \right]\Upsilon_n (x, \tau | z(s), s)ds, \nonumber \\
&\qquad +\int_0^\tau \left[\Psi(s)  - f^{-}(s) y'(s) \right] \Upsilon_n(x, \tau \,|\, y(s), s)ds \\
&\qquad + \int_0^\tau  f^-(s) \Lambda_n (x, \tau  \,|\,y(s), s) - f^+(s) \Lambda_n(x, \tau  \,|\, z(s), s) ds \Bigg\}, \nonumber \\
\Upsilon_n&(x, \tau \,|\, \xi, s) = \frac{1}{2\sqrt{\pi (\tau - s)}}\left[e^{-\frac{(2n l(\tau)  +x - \xi)^2}{4 (\tau - s)}} - e^{-\frac{(2n l(\tau)  + x +  \xi - 2 y(\tau))^2}{4 (\tau - s)}} \right],  \nonumber \\
\Lambda_n&(x, \tau \,|\, \xi, s)  = \frac{x - \xi + 2n l(\tau)}{4 \sqrt{\pi (\tau -s)^3}} e^{-\frac{(2n l(\tau)  + x  - \xi)^2}{4 (\tau - s)}} + \frac{x + \xi - 2 y(\tau) + 2n l(\tau) }{4 \sqrt{\pi (\tau -s)^3}} e^{-\frac{(2n l(\tau)  + x +  \xi - 2 y(\tau))^2}{4 (\tau - s)}}. \nonumber
\end{align}

Note that the Fourier series in these expressions usually converge rapidly when $n$ grows.  Similarly, taking the derivative of this series on $x$ provides a convenient way of calculating the corresponding derivative $\fp{U(\tau,x)}{x}$, \citep{NIST:DLMF}.

\subsection{A system of Volterra equations for $\Psi(\tau)$ and $\Phi(\tau)$}

In Section \ref{sec:Poisson} we managed to obtain two alternative representations of the solution of the problem, both in a semi-analytical form. These solutions, however, depend on two yet unknown functions gradients $\Psi(\tau), \Phi(\tau)$ that can be found by solving a system of two Volterra equations of the second kind. These equations are obtained by taking the derivative in \eqref{U_final} or \eqref{U_final_alt} with respect to $x$ and substituting  $x = y(\tau)$ and $x = z(\tau)$ into thus found expressions. However, at least formally there exist a problem with making the last step, because at these boundaries some integrals in the system of the Volterra equations will contain singularities. Below we describe the  resolution of these problems.

Let us again consider \eqref{U_final_alt}. It is easy to see that the functions $\fp{\Upsilon_n(x,\tau | \xi, s)}{x}$, $\fp{\Lambda_n(x,\tau | \xi, s)}{x}$ are regular only if $n \neq 0, \ x \in [y(\tau), z(\tau)], \ \xi \in [y(s), z(s)], \ s \to \tau$. At $n = 0$ functions $\fp{\Upsilon_0(x,\tau | y(s), s)}{x}$, $\fp{\Lambda_0(x,\tau | y(s), s)}{x}$ have a singularity when $s \to \tau,  \ x \to y(\tau)$, and functions $\fp{\Upsilon_0(x,\tau | z(s), s)}{x}, \fp{\Lambda_0(x,\tau | z(s), s)}{x}$ - when $s \to \tau,  \ x \to z(\tau)$.

Since the functions $\fp{\Upsilon_0(x,\tau | y(s), s}{x}$, $\fp{\Upsilon_0(x,\tau | z(s), s}{x}$ can be represented as a sum of double-layer potentials with a negative sign, the limiting values
\begin{equation*}
\lim_{x \to y(\tau)+0} \int_0^\tau \xi(s) \fp{\Upsilon_0(x, \tau | y(s), s)}{x} ds,
\qquad \lim_{x \to z(\tau)-0} \int_0^\tau \xi(s) \fp{\Upsilon_0(x, \tau | z(s), s)}{x} ds
\end{equation*}
\noindent can be computed similar to \eqref{Ulimits}.

Applying \eqref{finPsi2} to the limits corresponding to $\fp{\Lambda_0(x, \tau | y(s), s)}{x}, \fp{\Lambda_0(x, \tau | z(s), s)}{x}$ yields
\begin{align}
\lim_{x \to y(\tau)+0} &\int_0^\tau f^-(s) \fp{\Lambda_0(x, \tau | y(s), s)}{x} ds \\
&= -\frac{f^-(\tau)}{\sqrt{\pi \tau}} + \int_0^\tau \frac{f^-(s) e^{-\frac{(y(\tau) - y(s))^2}{4 (\tau - s)}} - f^-(\tau)}{2 \sqrt{\pi (\tau - s)^3}}ds - \int_0^\tau f^-(s) \frac{ (y(\tau) - y(s))^2 e^{-\frac{(y(\tau) - y(s))^2}{4 (\tau - s)}}}{4 \sqrt{\pi (\tau - s)^5}} ds, \nonumber \\
\lim_{x \to z(\tau)-0} &\int_0^\tau f^+(s) \fp{\Lambda_0(x, \tau | z(s), s)}{x} ds \nonumber \\
&= -\frac{f^+(\tau)}{\sqrt{\pi \tau}} + \int_0^\tau \frac{f^+(s) e^{-\frac{(z(\tau) - z(s))^2}{4 (\tau - s)}} - f^+(\tau)}{2 \sqrt{\pi (\tau - s)^3}}ds - \int_0^\tau f^+(s) \frac{ (z(\tau) - z(s))^2 e^{-\frac{(z(\tau) - z(s))^2}{4 (\tau - s)}}}{4 \sqrt{\pi (\tau - s)^5}} ds. \nonumber
\end{align}
Finally, taking the derivative of \eqref{U_final_alt} on $x$, setting $x = y(\tau)$ and $x=z(\tau)$, and using
these expressions, we obtain the following system of the Volterra equations of the second kind for the unknown functions $\Psi(\tau), \Phi(\tau)$
\begin{align} \label{Volterra_eq_alt}
-\Psi(\tau) &= -\frac{f^-(\tau)}{\sqrt{\pi \tau}} + \int_0^\tau \frac{ f^-(s) e^{-\frac{(y(\tau) - y(s))^2}{4 (\tau -s )}} \left[1 + y'(s) (y(\tau) - y(s))) -\frac{(y(\tau )- y(s))^2}{2 (\tau -s)} \right] -f^-(\tau)}{2 \sqrt{\pi (\tau - s)^3}} ds \\
&-\int_0^\tau \Psi(s) \frac{y(\tau) - y(s)}{2\sqrt{\pi (\tau -s)^3}} e^{- \frac{(y(\tau) - y(s))^2}{4(\tau -s)}} ds
 + \int_{y(0)}^{z(0)} U(0, \xi) \upsilon^-(\tau \, |\, \xi, 0) d\xi \nonumber \\
&+\int_0^\tau \left(\left[\Phi(s) + f^+(s) z'(s)  \right]\upsilon^-(\tau \,|\,z(s), s) + 	\left[\Psi(s)- f^-(s) y'(s) \right]\upsilon^-_0(\tau \, |\, y(s), s)\right)ds \nonumber \\
&+ \int_0^\tau \left( f^-(s) \lambda^-_0(\tau, \,|\, y(s), s) -f^+(s) \lambda^-(\tau, \,|\, z(s), s)  \right) ds, \nonumber \\
\Phi(\tau) &= \frac{f^+(\tau)}{\sqrt{\pi \tau}} - \int_0^\tau \frac{ 	f^+(s) e^{-\frac{(z(\tau) - z(s))^2}{4 (\tau -s )}} \left[1 +z'(s) (z(\tau) - z(s))- \frac{(z(\tau )- z(s))^2}{2 (\tau -s)} \right] -f^+(\tau)}{2 \sqrt{\pi (\tau - s)^3}} ds \nonumber \\
&-\int_0^\tau \Phi(s) \frac{z(\tau) - z(s)}{2\sqrt{\pi (\tau -s)^3}} e^{- \frac{(z(\tau) - z(s))^2}{4(\tau -s)}} ds
+ \int_{y(0)}^{z(0)} U(0, \xi) \upsilon^+(\tau \, |\, \xi, 0) d\xi \nonumber \\
&+ \int_0^\tau \left(\left[\Phi(s) + f^+(s) z'(s)  \right]\upsilon^+_{0}(\tau \,|\,z(s), s) + 	
\left[\Psi(s)- f^-(s) y'(s) \right]\upsilon^+(\tau \, |\, y(s), s)ds\right) \nonumber \\
&+ \int_0^\tau \left( f^-(s) \lambda^+(\tau \,|\, y(s), s) -f^+(s) \lambda^+_0(\tau \,|\,s)  \right) ds. 	
\nonumber
\end{align}

Here
\begin{align} \label{upsilon_lambda_def}
\upsilon_n(\tau \,|\, \xi ,s) &= - \frac{y(\tau) - \xi + 2 n l(\tau)}{2 \sqrt{\pi (\tau - s)^3}}e^{-\frac{(y(\tau) - \xi + 2 n l(\tau))^2}{4 (\tau -s)}}, \\
\lambda_n(\tau \,|\, \xi ,s) &= \frac{e^{-\frac{(y(\tau) - \xi + 2 n l(\tau))^2}{4 (\tau -s)}}}{2\sqrt{\pi (\tau - s)^3}} \left[ 1 - \frac{(y(\tau) - \xi+ 2 n l(\tau))^2}{2(\tau - s)} \right], \nonumber \\
\upsilon^-(\tau \,|\, \xi ,s) &= \sum_{n = -\infty}^{\infty} \upsilon_n(\tau \,|\, \xi ,s), \qquad
\upsilon^+(\tau \,|\, \xi ,s) = \sum_{n = -\infty}^{\infty} \upsilon_{n + \frac 12}(\tau \,|\, \xi ,s),  \nonumber \\
\upsilon^-_0(\tau \,|\, s) &= \sum_{\substack{n = -\infty\\ n \neq 0}}^{\infty}\upsilon_n(\tau \,|\, y(s) ,s), \qquad \upsilon^+_0(\tau \,|\, s) =\sum_{\substack{n = -\infty\\ n \neq 0}}^{\infty}  \upsilon_{n + \frac 12}(\tau \,|\, z(s) ,s), \nonumber\\
\lambda^-(\tau \,|\, \xi ,s) &= \sum_{n = -\infty}^{\infty} \lambda_n(\tau \,|\, \xi ,s), \qquad
\lambda^+(\tau \,|\, \xi ,s) = \sum_{n = -\infty}^{\infty} \lambda_{n + \frac 12}(\tau \,|\, \xi ,s),  \nonumber \\
\lambda^-_0(\tau \,|\, s) &= \sum_{\substack{n = -\infty\\ n \neq 0}}^{\infty} \lambda_n(\tau \,|\, y(s) ,s), \qquad \lambda^+_0(\tau \,|\, s) =\sum_{\substack{n = -\infty\\ n \neq 0}}^{\infty}  \lambda_{n + \frac 12}(\tau \,|\, z(s) ,s). \nonumber
\end{align}
It is worth emphasizing that all summands in \eqref{Volterra_eq_alt} are regular. The integrals with respect to the time in the first two lines have weak (integrable) singularities, while other summands are regular.

This system can be further simplified by using \eqref{Aux_for_Volterra} and reduction to the Lebesgue-Stieltjes integrals
\begin{align} \label{Volterra_eq_alt_LS}
-\Psi(\tau) &= \int_{y(0)}^{z(0)} U(0, \xi) \upsilon^-(\tau \, |\, \xi, 0) d\xi \\
&-\frac{f^-(\tau)}{\sqrt{\pi \tau}} + \int_0^\tau \frac{	f^-(s)-f^-(\tau)}{2 \sqrt{\pi (\tau - s)^3}} ds
+   \int_0^\tau  \left[ f^-(s)d \left(\eta^-(\tau \,|\, y(s), s)\right)  - f^+(s) d\left(\eta^-(\tau \,|\, z(s), s)\right) \right] \nonumber \\ &-\int_0^\tau \Psi(s) \frac{y(\tau) - y(s)}{2\sqrt{\pi (\tau -s)^3}} e^{- \frac{(y(\tau) - y(s))^2}{4(\tau -s)}} ds
+\int_0^\tau \left[\Phi(s) \upsilon^-(\tau \,|\,z(s), s) + \Psi(s)\upsilon^-_{0}(\tau \, |\, s)\right]ds \nonumber \\ \Phi(\tau) &= \int_{y(0)}^{z(0)} U(0, \xi) \upsilon^+(\tau \, |\, \xi, 0) d\xi \nonumber \\
&+\frac{f^+(\tau)}{\sqrt{\pi \tau}} - \int_0^\tau \frac{f^+(s) - f^+(\tau)}{2 \sqrt{\pi (\tau - s)^3}} ds
+   \int_0^\tau  \left[ f^-(s) d \left(\eta^+(\tau \,|\, y(s), s)\right)  - f^+(s) d\left(\eta^+(\tau \,|\, z(s), s)\right) \right] \nonumber \\ &-\int_0^\tau \Phi(s) \frac{z(\tau) - z(s)}{2\sqrt{\pi (\tau -s)^3}} e^{- \frac{(z(\tau) - z(s))^2}{4(\tau -s)}} ds
+\int_0^\tau \left[\Phi(s) \upsilon^+_{0}(\tau \,|\,s) + \Psi(s)\upsilon^+(\tau \, |\, y(s), s)\right] ds. \nonumber
\end{align}

Here the following notation is used
\begin{align} \label{eta_def}
\eta^-(\tau \,|\, \xi, s) &= -\frac{\delta_{\xi, y(s)}}{\sqrt{\pi (\tau -s)}}+ \frac{1}{\sqrt{\pi (\tau -s)}} \sum_{n = -\infty}^{\infty} e^{-\frac{(y(\tau) - \xi + 2n l(\tau))^2}{4(\tau -s)}}, \\
\eta^+(\tau \,|\, \xi, s) &= -\frac{\delta_{\xi,z(s)}}{\sqrt{\pi (\tau -s)}}+ \frac{1}{\sqrt{\pi (\tau -s)}} \sum_{n = -\infty}^{\infty} e^{-\frac{(y(\tau) - \xi + (2n +1) l(\tau))^2}{4(\tau -s)}}, \nonumber \\
\upsilon^-(\tau \,|\, \xi, s) &= -\frac{y(\tau) - \xi + 2 n l(\tau)}{2 \sqrt{\pi (\tau - s)^3}} e^{-\frac{(y(\tau) - \xi + 2 n l(\tau))^2}{4(\tau - s)}}, \nonumber \\
\upsilon^+(\tau \,|\, \xi, s) &= -\frac{y(\tau) - \xi + (2 n  + 1)l(\tau)}{2 \sqrt{\pi (\tau - s)^3}} e^{-\frac{(y(\tau) - \xi + (2 n + 1) l(\tau))^2}{4(\tau - s)}}, \nonumber
\end{align}
\noindent where $\delta_{\xi, x}$ is the Kronecker symbol.

The functions $\upsilon, \eta$ have the following properties
\begin{alignat*}{2}
\lim_{s \to \tau}\upsilon^-_{0}(\tau \,|\,s) &= 0 , \qquad \lim_{s \to \tau}\upsilon^-(\tau \,|\, z(s), s) &&= 0, \\
\lim_{s \to \tau}\upsilon^+(\tau \,|\, y(s), s) &= 0, \qquad \lim_{s \to \tau}\upsilon^+_{0}(\tau \,|\,s) &&= 0, \\
\lim_{s \to \tau}\eta^-(\tau \,|\, y(s), s) &= 0 , \qquad \lim_{s \to \tau}\eta^-(\tau \,|\, z(s), s) &&= 0, \\
\lim_{s \to \tau}\eta^+(\tau \,|\, y(s), s) &= 0 , \qquad \lim_{s \to \tau}\eta^+(\tau \,|\, z(s), s) &&= 0. \end{alignat*}

Again, using the Poisson summation formula yields a few alternative representations of the functions $\eta^{\pm}(\tau \,|\, \xi, s)$ and $\upsilon^{\pm}(\tau \,|\, \xi, s)$ via the Fourier series

\begin{align} \label{eta_def_Fourier}
\eta^-(\tau \,|\, \xi, s) &=-\frac{\mathbf{1}_{y(s) - \xi}}{\sqrt{\pi (\tau -s)}}
+ \frac{1}{l(\tau)} \left[ 1 + 2  \sum_{n = 1}^{\infty} e^{-\frac{\pi^2 n ^2}{l^2(\tau)} (\tau - s)} \cos \left( \frac{\pi n (\xi - y(\tau)) }{l(\tau)} \right) \right], \\
\eta^+(\tau \,|\, \xi, s) &=-\frac{\mathbf{1}_{\xi - z(s)}}{\sqrt{\pi (\tau -s)}} + \frac{1}{l(\tau)} \left[ 1 + 2  \sum_{n = 1}^{\infty} e^{-\frac{\pi^2 n ^2}{l^2(\tau)} (\tau - s)} (-1)^{n} \cos \left( \frac{\pi n (\xi - y(\tau)) }{l(\tau)} \right) \right],  \nonumber \\
\upsilon^-(\tau \,|\, \xi, s) &=\frac{2 \pi }{l^2(\tau)} \sum_{n = 1}^{\infty} n e^{-\frac{\pi^2 n ^2}{l^2(\tau)} (\tau - s)} \sin \left( \frac{\pi n (\xi - y(\tau)) }{l(\tau)} \right), \nonumber \\
\upsilon^+(\tau \,|\, \xi, s) &= \frac{2 \pi }{l^2(\tau)} \sum_{n = 1}^{\infty} n e^{-\frac{\pi^2 n ^2}{l^2(\tau)} (\tau - s)} (-1)^{n} \sin \left( \frac{\pi n (\xi - y(\tau)) }{l(\tau)} \right). \nonumber
\end{align}

Finally, using \eqref{theta_def} and \eqref{thetaRepr}, we obtain another representation of \eqref{eta_def_Fourier} in terms of the Jacobi theta function $\theta_3(z, \omega)$
\begin{align} \label{eta_def_Theta}
\eta^-(\tau \,|\, \xi, s) &=-\frac{\mathbf{1}_{y(s) - \xi}}{\sqrt{\pi (\tau -s)}}
+ \frac{1}{l(\tau)} \theta_3\left( \phi_-(\xi, y(\tau)), \omega_2 \right), \\
\eta^+(\tau \,|\, \xi, s) &=-\frac{\mathbf{1}_{\xi - z(s)}}{\sqrt{\pi (\tau -s)}} + \frac{1}{l(\tau)} \theta_3\left( \phi_-(\xi + l(\tau), y(\tau)), \omega_2 \right),  \nonumber \\
\upsilon^-(\tau \,|\, \xi, s) &=-\frac{\pi }{2 l^2(\tau)}  \theta'_3\left( \phi_-(\xi, y(\tau)), \omega_2 \right), \nonumber \\
\upsilon^+(\tau \,|\, \xi, s) &= -\frac{\pi }{2 l^2(\tau)}\theta'_3\left( \phi_-(\xi + l(\tau), y(\tau)), \omega_2 \right) . \nonumber
\end{align}

The formulas \eqref{eta_def} and \eqref{eta_def_Fourier} are complementary. Since the exponents in \eqref{eta_def_Theta} are proportional to the difference $\tau - s$, the Fourier series \eqref{eta_def_Theta} converge fast if $\tau - s$ is large. Contrary, the exponents in \eqref{eta_def} are inversely proportional to $\tau - s$. Therefore, the series \eqref{eta_def} converge fast if $\tau - s$ is small.

\section{Solution by the HP method} \label{sec:HP}

Similar to Section~\ref{sec:GIT}, the HP method, \citep{TS1963, Friedman1964, kartashov2001}, can be used to price double barrier options by solving the problem in \eqref{wEq}. The idea was first proposed and developed in \citep{ItkinMuravey2020} and is a generalization of the standard HP method for the case of two moving boundaries. Note, that to the best of authors' knowledge, yet  the closed form (or even semi-closed form) solution of this problem was not known in physics, even not mentioning finance. Below we explain our approach paying attention to all intermediate details as the behavior of the solution at the boundaries is not trivial.

Following the main idea of the HP method, let us search for the solution of the $\calB$ problem in \eqref{Heat} \eqref{bc}, \eqref{tc} in the form
\begin{equation} \label{q1}
U(\tau, x) = q(\tau,x) + \frac{1}{2\sqrt{\pi \tau}}  \int_{y(0)}^{z(0)} U(0,x')   e^{-\frac{(x-x')^2}{4\tau}} dx',
\end{equation}
\noindent so function $q(\tau,x)$ solves a problem with the homogeneous initial condition
\begin{align} \label{qDB}
\fp{q(\tau,x)}{\tau} &= \sop{q(\tau,x)}{x}, \\
q(0,x) &= 0, \qquad y(0) < x < z(0), \nonumber \\
q(\tau, y(\tau)) &= \phi_1(\tau) \equiv  f^-(\tau) - \frac{1}{2\sqrt{\pi \tau}} \int_{y(0)}^{z(0)} u(0,x')   e^{-\frac{(y(\tau)-x')^2}{4\tau}} dx', \nonumber \\
q(\tau, z(\tau)) &= \psi_1(\tau) \equiv  f^+(\tau) - \frac{1}{2\sqrt{\pi \tau}}  \int_{y(0)}^{z(0)} u(0,x')   e^{-\frac{(z(\tau)-x')^2}{4\tau}} dx'. \nonumber
\end{align}

In \citep{ItkinMuravey2020} it is proposed to search for the solution of \eqref{qDB} in the form of a generalized heat potential
\begin{equation} \label{poten1}
q(x,\tau) = \frac{1}{4\sqrt{\pi}} \int_0^\tau  \frac{1}{\sqrt{(\tau-k)^3}} \left( (x-y(k)) \Omega(k) e^{-\frac{(x-y(k))^2}{4(\tau-k)}}
+ (x-z(k))\Theta(k) e^{-\frac{(x-z(k))^2}{4(\tau-k)}} \right) dk,
\end{equation}
\noindent  where $\Omega(k), \Theta(k)$ are the heat potential densities. In other words, the solution is represented as a sum of two heat potentials: one corresponds to the lower barrier, and the other one - to the upper barrier. It is easy to check, that each such a potential solves the heat equation in \eqref{qDB}, see \citep{TS1963} as the derivative with respect to $\tau$ of the RHS of \eqref{poten1} can be pulled into the integral since the value of both integrands at $k = \tau$ vanishes.

To find the unknown functions $\Omega(k), \Theta(k)$  one can substitute into \eqref{poten1} the values $x=y(\tau)$ and $x = z(\tau)$, and get a system of two integral equations that the functions $\Omega(k), \Theta(k)$ solve. However, it is well-known, \citep{TS1963}, that these integrals at $x \to y(\tau)$  and $x \to z(\tau)$ have a discontinuity, but with the finite value at $x=y(\tau) \pm 0$ and $x=z(\tau) \pm 0$. To investigate this discontinuity in more detail and derive the value of heat potential integral at the boundary $x \to y(\tau) \pm 0$, we consider a problem similar to \eqref{qDB}
\begin{align} \label{qProblem}
\LL q(\tau,x) &= 0, \qquad  (x,\tau) \in \Omega: [y(\tau), \infty) \times \mathbb{R}_{+}, \\
q(0,x) &= 0, \quad y(0) < x < \infty, \nonumber \\
q(\tau,y(\tau)) &= \chi(\tau),  \qquad q(\tau, x)\Big|_{x \to \infty} = 0. \nonumber
\end{align}
\noindent with the operator ${\cal L}$ defined as
\begin{equation} \label{LLi_def1}
\LL =  - \frac{\partial }{\partial \tau} + \sigma^2 \frac{\partial^2}{\partial x},
\end{equation}
\noindent where $\sigma = const$. Using the HP method,  the solution of this problem can be expressed as
\begin{align} \label{udef}
q(\tau, x) &= \int_0^\tau \Omega(k) \frac{x- y(k)}{4\sigma^3 \sqrt{\pi (\tau - k)^3}} e^{-\frac{(x- y(k))^2}{4\sigma^2(\tau - k)}} d k,
\end{align}
\noindent where $\Omega(\tau)$ is the heat potential density, and $y(\tau)$ is a smooth curve (the moving boundary).
Our aim below is to derive the value of this integral at $x \to y(\tau) \pm 0$, and the gradient $\partial q(\tau,x)/\partial x$ in the same limit, namely
\begin{equation} \label{lim_defs}
\varphi(\tau) = \lim_{x \to y(\tau) \pm 0} q(\tau, x) ,\qquad \psi(\tau) = \lim_{x \to y(\tau) \pm 0} \fp{q(\tau, x)}{x}.
\end{equation}

\subsection{The limiting value of $\varphi(t)$} \label{limitPhi}

This result is obtained, eg., in \citep{TS1963}. Consider a function $W(\tau,x) = 2 \sigma^2 \phi(t)$
\begin{align} \label{Wint}
W(\tau,x)  &= \int_0^\tau \Omega(k) \frac{x - y(k)}{2\sigma \sqrt{\pi (\tau - k)^3}} e^{-\frac{(y(\tau)- y(k))^2}{4\sigma^2(\tau - k)}} dk.
\end{align}
Also consider an auxiliary integral
\begin{align} \label{Vint}
\tilde{V}(\tau,x)  &= \int_0^\tau  \frac{y'(k)  \Omega(k)}{\sigma \sqrt{\pi (\tau - k)}} e^{-\frac{(x - y(k))^2} {4\sigma^2(\tau - k)}} dk.
\end{align}
Assume that $y(k)$ is differentiable. As shown in \citep{TS1963}, $\tilde{V}(\tau,x)$ is continuous along the curve $x = y(\tau)$ because it converges uniformly and $y'(k)$ is bounded, while $W(\tau,x)$ is discontinuous. To show this, first assume that $\Omega(\tau) = \Omega_0 = const$. Then the difference $W_0 - \tilde{V}_0$, where the sub-index $_0$ means that we use $\Phi_0$ instead of $\Phi(\tau)$ in the definitions \eqref{Wint},  \eqref{Vint}, can be calculated directly with the change of variables $ k \mapsto a = (x-y(k))/(2 \sigma \sqrt{\tau - k})$
\begin{align} \label{difWV}
W_0 - \tilde{V}_0 &= \frac{1}{2 \sigma \sqrt{\pi}}\int_0^\tau  \Omega_0  e^{-\frac{(x- y(k))^2}{4\sigma^2(\tau - k)}} \left[ \frac{x - y(k)}{{(\tau - k)^{3/2}}} - \frac{2 y'(k)}{{(\tau - k)^{1/2}}} \right] dk = \Omega_0 \frac{2}{\sqrt{\pi}} \int_{\zeta^-}^{\zeta^+} e^{-a^2} da, \\
\zeta^- &= \frac{x - y(0)}{2 \sigma\sqrt{\tau}}, \qquad
\zeta^+ =
\begin{cases}
\infty, & x > y(\tau), \\
0, & x = y(\tau), \\
-\infty, & x < y(\tau).
\end{cases}
\nonumber
\end{align}
Accordingly, at, say $x \to y(\tau)+0$ we obtain
\begin{equation} \label{sqbr}
\left[W_0(\tau, y(\tau)+0)  - W_0(\tau, y(\tau)) \right] - \left[\tilde{V}_0(\tau, y(\tau)+0)  - \tilde{V}_0(\tau, y(\tau)) \right] = \Omega_0 \frac{2}{\sqrt{\pi}} \int_0^\infty e^{-a^2} da = \Omega_0.
\end{equation}
Since the function $\tilde{V}_0$ is continuous, the expression in the second square brackets in  \eqref{sqbr} vanishes, and so
\begin{equation} \label{sqbr1}
W_0(\tau, y(\tau)+0)  - W_0(\tau, y(\tau)) = \Omega_0.
\end{equation}
If $\Omega(\tau)$ is not constant, then
\begin{equation} \label{sqbr2}
W(\tau, x)  = W_0(\tau, x) - \int_0^\tau \frac{x- y(k)}{2\sigma \sqrt{\pi (\tau - k)^3}} e^{-\frac{(x-y(k))^2} {4\sigma^2(\tau - k)}} [\Omega(\tau) - \Omega(k)] dk.
\end{equation}

We assume that the boundary $y(\tau)$ and the potential density $\Omega(k)$ are differentiable functions of their arguments,  i.e., at least ${\cal C}^1$. Then the integral in \eqref{sqbr2} has the same singularity as the function $\tilde{V}(\tau,x)$, converges uniformly, and thus is a continuous function on the curve $x = y(\tau)$. This implies that
\begin{equation} \label{sqbr3}
W(\tau, x_0+0) -  W(\tau, x_0) = W_0(\tau, x_0+0) -  W_0(\tau, x_0) = \Omega(\tau),
\end{equation}
\noindent and, in particular, this is true for $x_0 = y(\tau)$.   In a similar way one can show that
\begin{equation} \label{sqbr4}
W(\tau, x_0-0)  = W_0(\tau, x_0) - \Omega(\tau),
\end{equation}

Combining these results together, we obtain the final formula for $\varphi(t)$
\begin{equation} \label{fin_varphi}
\varphi(\tau) =  \pm \frac{\Omega(\tau)}{2\sigma^2} + \int_0^\tau \Omega(k)
\frac{y(\tau)- y(k)}{4\sigma^3 \sqrt{\pi (\tau - k)^3}} e^{-\frac{(y(\tau)- y(k))^2}{4\sigma^2(\tau - k)}} dk.
\end{equation}

\subsection{The limiting value of $\psi(t)$}

Using the definition of $q(\tau,x)$ in \eqref{udef} we need an explicit formula for
\begin{align} \label{dqdx}
\psi(\tau) = \lim_{x \to y(\tau) \pm 0} \fp{q(\tau, x)}{x} &= \lim_{x \to y(\tau) \pm 0} \fp{}{x}
\int_0^\tau \Omega(k) \frac{x- y(k)}{4\sigma^3 \sqrt{\pi (\tau - k)^3}} e^{-\frac{(x- y(k))^2}{4\sigma^2(\tau - k)}} d k.
\end{align}

However, as shown in Section~\ref{limitPhi}, this integral is discontinuous at $x \to y(\tau)$ (this is an improper Riemann integral of second kind). Hence, we cannot compute $\psi(\tau)$ directly by taking derivative of $q(\tau,x)$ with respect to $x$.

Therefore, to proceed let us represent this integral as
\begin{align} \label{int12}
\int_0^\tau & \Omega(k) \frac{x- y(k)}{4\sigma^3 \sqrt{\pi (\tau - k)^3}} e^{-\frac{(x- y(k))^2}{4\sigma^2(\tau - k)}} d k = \Omega(\tau) \int_0^\tau \frac{x- y(k)}{4\sigma^3 \sqrt{\pi (\tau - k)^3}} e^{-\frac{(x- y(k))^2}{4\sigma^2(\tau - k)}} d k \\
&+ \int_0^\tau [\Omega(k) - \Omega(\tau)] \frac{x- y(k)}{4\sigma^3 \sqrt{\pi (\tau - k)^3}} e^{-\frac{(x- y(k))^2}{4\sigma^2(\tau - k)}} d k = I_1 + I_2. \nonumber
\end{align}

We showed in Section~\ref{limitPhi} that the second integral in \eqref{int12} has the same singularity as the function $\tilde{V}(\tau,x)$, converges uniformly, and thus is a continuous function on the curve $x = y(\tau)$. Then, it is a continuous function for $x \in \Re$. Thus, by the standard theorem of integral calculus we can differentiate this integral by parameter $x$, and the result is continuous in $x$, \citep{Butuzov2}
\begin{align} \label{int1}
&\ \lim_{x \to y(\tau) \pm 0} \fp{}{x} \int_0^\tau [\Omega(k) - \Omega(\tau)] \frac{x- y(k)}{4\sigma^3 \sqrt{\pi (\tau - k)^3}} e^{-\frac{(x- y(k))^2}{4\sigma^2(\tau - k)}} d k \\
&= \lim_{x \to y(\tau) \pm 0} \int_0^\tau [\Omega(k) - \Omega(\tau)] \frac{e^{-\frac{(x- y(k))^2}{4\sigma^2(\tau - k)}} }{4\sigma^3 \sqrt{\pi (\tau -k)^3}} \left( 1 - \frac{(x - y(k))^2}{2\sigma^2 (\tau - k)} \right) dk \nonumber \\
&= \int_0^\tau [\Omega(k) - \Omega(\tau)] \frac{e^{-\frac{(y(\tau)- y(k))^2}{4\sigma^2(\tau - k)}} }{4\sigma^3 \sqrt{\pi (\tau -k)^3}} \left( 1 - \frac{(y(\tau) - y(k))^2}{2\sigma^2 (\tau - k)} \right) dk. \nonumber
\end{align}

As far as the first integral $I_1$ in \eqref{int12} is concerned, it was already considered in Section~\ref{limitPhi}, and is denoted as $W_0(\tau,x) / 2\sigma^2$ in \eqref{difWV}. Since the integral on $a$ in the RHS of \eqref{difWV} can be computed explicitly, we have
\begin{equation} \label{difWV0}
W_0 - \tilde{V}_0 = \Omega_0 \frac{2}{\sqrt{\pi}} \int_{\zeta^-}^{\zeta^+} e^{-a^2} da = \Omega_0
\begin{cases}
\mathrm{Erfc} \left(\frac{x-y(0)}{2 \sigma \sqrt{\tau }}\right), & x > y(\tau), \\
- \mathrm{Erf} \left(\frac{x-y(0)}{2 \sigma \sqrt{\tau }}\right), & x = y(\tau), \\
-\mathrm{Erfc} \left(-\frac{x-y(0)}{2 \sigma \sqrt{\tau }}\right), & x < y(\tau).
\end{cases}
\end{equation}

Also, recall that the function $\tilde{V}_0 (\tau,x)$ is the continuous function along the curve $x=y(\tau)$ as $y'(\tau)$ is bounded, and the integral converges uniformly. Therefore
\begin{align} \label{der10}
\fp{W_0}{x} &= \fp{\tilde{V}_0}{x} - \Omega_0 \Lambda(\tau,x), \\
\Lambda(\tau,x) &=
\begin{cases}
\frac{1}{\sigma \sqrt{\pi \tau} } e^{-\frac{(x- y(0))^2}{4 \pi \sigma^2}}, & x > y(\tau), \\
\frac{1}{ \sigma \sqrt{\pi \tau}} e^{-\frac{(x- y(0))^2}{4 \pi \sigma^2}}, & x < y(\tau).
\end{cases}
\nonumber
\end{align}

Thus, $\Lambda(\tau,y(\tau)-0) = \Lambda(\tau,y(\tau)+0)$, hence the function $\Lambda(\tau,x)$ is differentiable at this point. This implies
\begin{equation} \label{der11}
\fp{W_0}{x} = -\Omega_0 \int_0^\tau y'(k) \frac{x- y(k)}{2\sigma^3 \sqrt{\pi (\tau - k)^3}} e^{-\frac{(x- y(k))^2}{4\sigma^2(\tau - k)}} d k -  \frac{\Omega_0}{\sigma \sqrt{\pi \tau}} e^{-\frac{(x- y(0))^2}{4 \sigma^2 \tau}}.
\end{equation}

As it was mentioned, the function $\tilde{V}_0(\tau,x)$ is continuous over the curve $x = y(\tau)$. However, its derivative with respect to $x$ at $x = y(\tau)$  in \eqref{der10} has a form of the RHS in \eqref{Wint}. Therefore, according to the result of Section~\ref{limitPhi}, in the limit $x \to y(\tau)$, again using \eqref{fin_varphi}, we obtain
\begin{align} \label{int2}
\lim_{x \to y(\tau) \pm 0} \fp{W_0}{x} = \mp \Omega_0  \frac{y'(\tau)}{\sigma^2}
-\Omega_0 \int_0^\tau y'(k) \frac{y(\tau)- y(k)}{2\sigma^3 \sqrt{\pi (\tau - k)^3}} e^{-\frac{(y(\tau) - y(k))^2}{4\sigma^2(\tau - k)}} d k - \frac{\Omega_0}{\sigma \sqrt{\pi \tau}} e^{-\frac{(y(\tau)- y(0))^2}{4 \sigma^2 \tau}}.
\end{align}

Combining \eqref{int1} and \eqref{int2} together yields the final result
\begin{align} \label{finPsi}
\psi(\tau) &= \int_0^\tau \Omega(k)  \frac{e^{-\frac{(y(\tau)- y(k))^2}{4\sigma^2(\tau - k)}} }{4\sigma^3 \sqrt{\pi (\tau -k)^3}} \left( 1 - \frac{(y(\tau) - y(k))^2}{2\sigma^2 (\tau - k)} \right) dk  - \Omega(\tau) f(\tau), \\
f(\tau) &= \pm \frac{ y'(\tau)}{2 \sigma^4}  +
\frac{1}{2\sigma^3 \sqrt{\pi \tau} } e^{-\frac{(y(\tau)- y(0))^2}{4  \sigma^2 \tau}} \nonumber \\
&+ \int_0^\tau \frac{e^{-\frac{(y(\tau) - y(k))^2}{4\sigma^2(\tau - k)}}}{4\sigma^3 \sqrt{\pi (\tau - k)^3}}
\left\{ 1 +  \frac{y'(k)[y(\tau) - y(k)]}{\sigma^2} -  \frac{(y(\tau) - y(k))^2}{2\sigma^2 (\tau - k)} \right\} d k. \nonumber
\end{align}

Thus, we proved that the derivative $\partial q(\tau,x)/\partial x$ is also discontinuous at $x=y(\tau)$, and obtained its explicit representation in \eqref{finPsi}. Note, that this derivative should not be confused with the normal (directional)  derivative of $u(\tau,x)$ which is continuous at $x=y(\tau)$.  Indeed, the function $q$, as defined in \eqref{udef}, is the double layer heat potential. The claim that this derivative is continuous is commonly referred as the Lyapunov-Tauber theorem of classic potential theory, see \citep{LyapunovBook}, and \citep{Gunter1967, Quaife2011, Costabel1990, Kristensson2009} and references therein for the extension to the multi-dimensional case.

It is worth mentioning, that the formula for $f(\tau)$ can be further simplified. Indeed
\begin{align*}
d\left(\frac{e^{-\frac{(y(\tau) - y(k))^2}{4\sigma^2(\tau - k)}}}{\sqrt{\tau - k}} \right)&= \left[
\frac{e^{-\frac{(y(\tau) - y(k))^2}{4\sigma^2(\tau - k)}}}{2\sqrt{(\tau - k)^3}}
- \frac{e^{-\frac{(y(\tau) - y(k))^2}{4\sigma^2(\tau - k)}}}{\sqrt{\tau - k}}\left( -\frac{y'(k)(y(\tau) - y(k))}{2\sigma^2 (\tau - k)} + \frac{(y(\tau) - y(k))^2}{4\sigma^2(\tau -k)^2}  \right) \right] dk \\
&= \frac{e^{-\frac{(y(\tau) - y(k))^2}{4\sigma^2(\tau - k)}}}{2\sqrt{(\tau - k)^3}} \left(1 +  \frac{y'(k)(y(\tau) - y(k))}{\sigma^2 (\tau - k)} -\frac{(y(\tau) - y(k))^2}{2\sigma^2(\tau -k)^2} \right) dk.
\end{align*}
Therefore,
\begin{equation} \label{Aux_for_Volterra}
\frac{e^{-\frac{(y(\tau) - y(k))^2}{4\sigma^2(\tau - k)}}}{2\sqrt{(\tau - k)^3}} \left(1 +  \frac{y'(k)(y(\tau) - y(k))}{\sigma^2 (\tau - k)} - \frac{(y(\tau) - y(k))^2}{2\sigma^2(\tau -k)^2} \right)dk
=d\left(\frac{e^{-\frac{(y(\tau) - y(k))^2}{4\sigma^2(\tau - k)}} - 1}{\sqrt{\tau - k}} \right) + \frac{dk}{2 \sqrt{(\tau -k)^3}}.
\end{equation}

Plugging this expression into the formula for $f(\tau)$ and integrating yields an alternative representation for $f(\tau)$
\begin{equation}
f(\tau) = \pm \frac{ y'(\tau)}{2 \sigma^4} +
\frac{1}{2\sigma^3 \sqrt{\pi \tau}}+ \int_0^\tau \frac{dk}{4\sigma^3 \sqrt{\pi (\tau - k)^3}},
\end{equation}
\noindent and for $\psi(\tau)$, respectively
\begin{align} \label{finPsi2}
\psi(\tau) &= - \Omega(\tau) \left( \frac{1}{2\sigma^3 \sqrt{\pi \tau}}
\pm \frac{ y'(\tau)}{2 \sigma^4}
\right) + \int_0^\tau   \frac{\Omega(k) e^{-\frac{(y(\tau)- y(k))^2}{4\sigma^2(\tau - k)}} - \Omega(\tau) }{4\sigma^3 \sqrt{\pi (\tau -k)^3}} dk \\
&- \int_0^\tau \Omega(k) \frac{(y(\tau) - y(k))^2 e^{-\frac{(y(\tau)- y(k))^2}{4\sigma^2(\tau - k)}} }{8\sigma^5 \sqrt{\pi (\tau -k)^5}}  dk. \nonumber
\end{align}

The last formula for the particular case $\sigma = 1/\sqrt{2}$ was also obtained in \citep{LKR2019} by using a different method.

\subsection{A system of Volterra equations}

With allowance for the representation obtained in \eqref{fin_varphi}, by substituting the limiting values $x \to y(\tau)$ and $x \to z(\tau)$ into \eqref{poten1}, we obtain a system of two integral equation for functions $\Omega(\tau), \Theta(\tau)$
\begin{align} \label{Abel2k1}
2\phi_1(\tau) &= \Omega(\tau) + \frac{1}{2\sqrt{\pi}} \int_0^{\tau}   \left( \Omega(k) \frac{y(\tau)-y(k)}{(\tau-k)^{3/2} } e^{-\frac{(y(\tau)-y(k))^2}{4(\tau-k)}}
+\Theta(k) \frac{y(\tau)-z(k)}{(\tau-k)^{3/2} } e^{-\frac{(y(\tau)-z(k))^2}{4(\tau-k)}} \right) dk, \\
2\psi_1(\tau) &= -\Theta(\tau) + \frac{1}{2\sqrt{\pi}} \int_0^{\tau} \left( \Omega(k) \frac{z(\tau)-y(k)}{(\tau-k)^{3/2} } e^{-\frac{(z(\tau)-y(k))^2}{4(\tau-k)}}
+ \Theta(k) \frac{z(\tau)-z(k)}{(\tau-k)^{3/2} }e^{-\frac{(z(\tau)-z(k))^2}{4(\tau-k)}} \right) dk. \nonumber
\end{align}
Each equation in this system is a Volterra equation of the second kind. The system can be solved, eg.,  by the Variational Iteration Method (VIM), see \citep{Wazwaz2011} with a linear complexity by using the Fast Gaussian Transform. Once this is done, the solution of our double barrier problem is found.

It is interesting that the representation of the solution gradient in \eqref{finPsi2} provides connection between the GIT and HP methods. Indeed, by definition in \eqref{PsiPhi_def} and also using \eqref{homTrans}, \eqref{q1}
\begin{align} \label{connect}
\Psi(\tau) &= - \fp{U(\tau, x)}{x} \Bigg|_{x = y(\tau)} \\
&=  - \fp{q(\tau, x)}{x} \Bigg|_{x = y(\tau)+0}
+ \frac{1}{4\sqrt{\pi \tau^3}} \int_{y(0)}^{z(0)} U(0,x') (y(\tau) -x') e^{-\frac{(y(\tau)-x')^2}{4\tau}} dx', \nonumber \\
\Phi(\tau) &= \fp{U(\tau, x)}{\tau} \Bigg|_{x = z(\tau)} \nonumber \\
&=  \fp{q(\tau, x)}{x} \Bigg|_{x = z(\tau)-0} + \frac{1}{4\sqrt{\pi \tau^3}} \int_{y(0)}^{z(0)} U(0,x') (z(\tau) -x') e^{-\frac{(z(\tau)-x')^2}{4\tau}} dx'. \nonumber
\end{align}

Therefore, once the pair $\Omega(\tau), \Theta(\tau)$ is known, the other pair $\Psi(\tau), \Phi(\tau)$ can be obtained explicitly from \eqref{connect}. The opposite is also true, i.e., once the pair $\Psi(\tau), \Phi(\tau)$ is known, the heat potential densities $\Omega(\tau), \Theta(\tau)$ can be found by solving this system of Volterra equations of the second kind. Thus, both the GIT and HP methods are interchangeable. But as was mentioned in Introduction, despite
both solutions are equal, their convergence properties are different.

\section{Discussion}

In this paper we extend the technique of semi-analytic (or semi-closed form) solutions, developed in \citep{CarrItkin2020, ItkinMuravey2020, CarrItkinMuravey2020,ItkinLiptonMuravey, LiptonKu2018, LiptonPrado2020},  to pricing double barrier options and present two approaches to solving it: the General Integral transform method and the Heat Potential method.  By semi-analytic solution we mean that first, we need to solve a system of two linear Volterra equations of the second kind, and then  the option price is represented as a one-dimensional integral.

Therefore, perhaps the main point is about efficiency and robustness of the proposed approach. As shown in \citep{CarrItkin2020, ItkinMuravey2020, CarrItkinMuravey2020,ItkinLiptonMuravey}, from the computational point of view the solution proposed by using the same technique for pricing single barrier options under various models with time-dependent barriers is very efficient and, at least theoretically, of the same complexity, or even faster than the forward finite-difference (FD) method. On the other hand, our approach provides high accuracy in computing the options prices, as this is regulated by quadrature rule used to discretize the integral kernel in \eqref{VolterraTheta}  \eqref{Volterra_eq_alt}), or in \eqref{Abel2k1}. Therefore,  the accuracy of the method in $x$ space can be easily increased by using high order quadratures.  For instance, using the Simpson instead of the trapezoid rule doesn't affect the complexity of our method but increases the accuracy, while increasing the accuracy for the FD method is not easy (i.e., it significantly increases the complexity of the method, e.g., see \citep{ItkinBook}).

As applied to pricing double barrier options - the problem considered in this paper, the difference is that instead of a single Volterra equation of the second kind we now have to solve a system of two equations, either in \eqref{VolterraTheta}  \eqref{Volterra_eq_alt}),  or in \eqref{Abel2k1}. This can be done in the same way as for the single barrier problem. The Volterra equation is solved by discretizing the kernel of the integral in time using some quadrature rule which yields a system of linear equations with respect to the discrete values of $\Psi(\tau), \Phi(\tau)$. It can be checked that the matrix of this system is of the form
\begin{align*}
\matr{M} =
\begin{pmatrix}
A & B \\
C & D \\
\end{pmatrix}
,
\end{align*}
\noindent where $A,D$ are lower triangular matrices with ones on the main diagonal, and $B,C$ are lower triangular matrices with zeros on the main diagonal. Therefore, this system can be solved by a simple Gauss elimination method (by a set of algebraic multiplications and additions) with complexity $O(2N)$ where $N$ is the number of the discretization points in $\tau$ for $\Psi(\tau), \Phi(\tau)$. Alternatively, when using \eqref{Abel2k1} or \eqref{Volterra_eq_alt}, since the kernel is proportional to Gaussians, the discrete sum approximating the integral can be computed with linear complexity $O(2N)$ using the Fast Gauss Transform, see eg., \citep{FGT2010}.

Once the vectors $\Psi(\tau), \Phi(\tau)$ (for the GIT method), or $\Omega(\tau), \Theta(\tau)$ (for the HP method)  are found, they can be substituted into \eqref{u_final2} or \eqref{U_final_alt} for the GIT method), or into  \eqref{poten1} (for the HP method). Then the final solution is obtained by computing the integral(s) numerically.
Various numerical examples illustrating this technique for a single barrier pricing problem can be found in \citep{CarrItkin2020, ItkinMuravey2020, CarrItkinMuravey2020,ItkinLiptonMuravey}. Also, those examples demonstrate that computationally our method is more efficient than both the backward and even forward FD methods (if one uses them to solve this kind of problems), while providing better accuracy and stability.

Somebody could be a bit confused of this terminology, since despite the solution is found explicitly as an integral, the latter depends of the unknown function of time $\Psi(\tau)$.  In support of this terminology, we can mention that the solution is definitely of a closed form on variable $x$. On variable $\tau$ the integrand explicitly depends on yet unknown function $\Psi(\tau)$ which solves the Volterra integral equation of the second kind. However, this equation can be solved with no iterations. Indeed, after the function $\Psi(\tau)$ is discretized on some grid in $\tau$ (so now it is represented by a finite vector $\psi$), the integral equation reduces to the linear equation for $\psi$, with the matrix being low triangular. Thus, the solution can be immediately obtained by a simple Gauss elimination with no iterations. Therefore, this is explicit and as such, the solution is given by a series of algebraic operations (substitutions). The finer is the grid, the closer is the solution to the exact one.

Also, we can make a reference to \cite{LiptonPrado2020,CarrItkinMuravey2020} where the phrase "semi-closed" was used verbatim. And in \cite{LKR2019,LiptonKu2018} it is called as "semi-analytical" solution. Going back in time, in \cite{KartashovLyubov1974, kartashov1999,kartashov2001} both GIT and HP methods are claimed as analytical. One can also look at \cite{TS1963}, page 533, subsection 2, which from the very beginning says, "Heat potentials are a convenient analytical device for solving boundary-value problems". Therefore, we think this terminology is appropriate.

Also, as mentioned in \citep{CarrItkinMuravey2020}, another advantage of the approach advocated in this paper is computation of option Greeks. Indeed, in both the HP and GIT methods the option prices  are represented in an explicit analytic form on $x$ (via the integrals on $\tau$ and the auxiliary variable $\xi$). This means that an explicit dependence of the option prices on the model parameters is available and transparent. Thus, explicit representations of the option Greeks can be obtained by a simple differentiation under the integrals. This means that the Greek values can be computed simultaneously with the option prices with almost no additional increase in the elapsed time. This is possible  because differentiation under the integrals slightly changes the integrands, while these changes could be represented as changes in weights of the quadrature scheme used to compute the integrals.

Also, the integrands in the integral representation of the solution could be treated as a product of some density function and weights. The major computational time is spent for computing the densities as they contain special functions. However, once computed the results can be saved during the calculation of prices, and then reused when computing the Greeks. Therefore, computing Greeks can be done very fast. This is also true eg., for Vega and other Greeks that cannot be computed by the FD method together with prices and require a separate run of the FD machinery. Here we don't have such a problem as differentiation of the integral representation with respect to the model parameters is done analytically.

Finally, as mentioned in \citep{ItkinMuravey2020}, the GIT and HP methods are complementary. In more detail, this means the following. Our experiments showed that performance of both the GIT and HP methods is same.  However, the GIT method produces more accurate results at high strikes and maturities (i.e. where the option price is relatively small) in contrast to the HP method which is more accurate at short maturities and low strikes. For the CIR and CEV models this behavior was explained in \citep{CarrItkinMuravey2020}, and for the Hull-White model - in \citep{ItkinMuravey2020}. Briefly, for the heat equation that we consider in this paper, the exponents in both the HP and GIT integrals are inversely proportional to $\tau$.  However, the GIT integrals contain a difference of two exponents (see the definition of $\Upsilon_n(x,\tau \,|\, \xi, s)$ in \eqref{U_final_alt} which becomes small at large $\tau$. On contrary, the HP exponent in \eqref{poten1} tends to 1 at large $\tau$. Therefore, the convergence properties of two methods are different at large $\tau$.

This situation is well known for the heat equation with constant coefficients. There exist two representation of the solution: one - obtained by using the method of images, and the other one - by the Fourier series. Despite both solutions are equal as the infinite series, their convergence properties are different.

\section*{Acknowledgments}

We are grateful to Alex Lipton for some fruitful discussions. Dmitry Muravey acknowledges support by the Russian Science Foundation under the Grant number 20-68-47030.


\begin{thebibliography}{34}
\providecommand{\natexlab}[1]{#1}
\providecommand{\url}[1]{\texttt{#1}}
\expandafter\ifx\csname urlstyle\endcsname\relax
  \providecommand{\doi}[1]{doi: #1}\else
  \providecommand{\doi}{doi: \begingroup \urlstyle{rm}\Url}\fi

\bibitem[Bouchouev(2020)]{Bouchouev2020}
I.~Bouchouev.
\newblock Negative oil prices put spotlight on investors.
\newblock \emph{Risk.net}, 2020.

\bibitem[Brogan(2020)]{Brogan2020}
R.~Brogan.
\newblock Options traders adapt to electronic markets in pandemic, 2020.
\newblock URL
  \url{https://flextrade.com/options-traders-adapt-to-electronic-markets-in-pandemic/}.

\bibitem[Butuzov and Butuzova(2016)]{Butuzov2}
V.F. Butuzov and M.V. Butuzova.
\newblock \emph{Integrals depending on parameters}.
\newblock Moscow State University, Moscow, 2016.
\newblock in Russian.

\bibitem[Carr and Itkin(2020)]{CarrItkin2020}
P.~Carr and A.~Itkin.
\newblock {Semi-closed form solutions for barrier and American options written
  on a time-dependent Ornstein Uhlenbeck process}, March 2020.
\newblock Arxiv:2003.08853.

\bibitem[Carr et~al.(2020)Carr, Itkin, and Muravey]{CarrItkinMuravey2020}
P.~Carr, A.~Itkin, and D.~Muravey.
\newblock Semi-closed form prices of barrier options in the time-dependent cev
  and cir models.
\newblock \emph{Journal of Derivatives}, 28\penalty0 (1):\penalty0 26--50,
  2020.

\bibitem[Costabel(1990)]{Costabel1990}
M.~Costabel.
\newblock Boundary integral operators for the heat equation.
\newblock \emph{Integral Equations and Operator Theory}, 13\penalty0
  (4):\penalty0 498--552, 1990.

\bibitem[Dias()]{Dias2014}
C.J. Dias.
\newblock A method of recursive images to solve transient heat diffusionin
  multilayer materials.
\newblock 85:\penalty0 1075--1083.

\bibitem[{\relax DLMF}()]{NIST:DLMF}
{\relax DLMF}.
\newblock {\it NIST Digital Library of Mathematical Functions}.
\newblock http://dlmf.nist.gov/, Release 1.0.28 of 2020-09-15.
\newblock URL \url{http://dlmf.nist.gov/}.
\newblock F.~W.~J. Olver, A.~B. {Olde Daalhuis}, D.~W. Lozier, B.~I. Schneider,
  R.~F. Boisvert, C.~W. Clark, B.~R. Miller, B.~V. Saunders, H.~S. Cohl, and
  M.~A. McClain, eds.

\bibitem[Doff(2020)]{Doff2020}
R.~Doff.
\newblock Valuing scenarios with real option pricing.
\newblock \emph{Risk.net}, August 2020.

\bibitem[Farrington and Cesa(2020)]{Cesa2020}
S.~Farrington and M.~Cesa.
\newblock Podcast: Kaminski and ronn on negative oil and options pricing.
\newblock \emph{Risk.net}, May 2020.

\bibitem[Friedman(1964.)]{Friedman1964}
A.~Friedman.
\newblock \emph{Partial Differential Equations of Parabolic Type}.
\newblock Prentice-Hall, New Jersey,, 1964.

\bibitem[Gradshtein and Ryzhik(2007)]{GR2007}
I.S. Gradshtein and I.M. Ryzhik.
\newblock \emph{Table of Integrals, Series, and Products}.
\newblock Elsevier, 2007.

\bibitem[Guinter(1967)]{Gunter1967}
N.M. Guinter.
\newblock \emph{Potential Theory and Its Applications to Basic Problems of
  MathematicalPhysics}.
\newblock Frederick Ungar, New York, 1967.

\bibitem[Itkin(2017)]{ItkinBook}
A.~Itkin.
\newblock \emph{Pricing Derivatives Under {L\'evy} Models. Modern
  Finite-Difference and Pseudo-Differential Operators Approach.}, volume~12 of
  \emph{Pseudo-Differential Operators}.
\newblock Birkhauser, 2017.

\bibitem[Itkin and Muravey(2020)]{ItkinMuravey2020}
A.~Itkin and D.~Muravey.
\newblock {Semi-closed form prices of barrier options in the Hull-White model},
  April 2020.
\newblock Arxiv:2004.09591.

\bibitem[Itkin et~al.(2020{\natexlab{a}})Itkin, Lipton, and
  Muravey]{ItkinLiptonMuravey}
A.~Itkin, A.~Lipton, and D.~Muravey.
\newblock From the black-karasinski to the verhulst model to accommodate the
  unconventional fed's policy, June 2020{\natexlab{a}}.
\newblock URL \url{https://arxiv.org/abs/2006.11976}.

\bibitem[Itkin et~al.(2020{\natexlab{b}})Itkin, Lipton, and
  Muravey]{ItkinLiptonMuraveyMulti}
A.~Itkin, A.~Lipton, and D.~Muravey.
\newblock Multilayer heat equations: application to finance.
\newblock in preparation, 2020{\natexlab{b}}.

\bibitem[Kartashov(1999)]{kartashov1999}
E.~M. Kartashov.
\newblock Analytical methods for solution of non-stationary heat conductance
  boundary problems in domains with moving boundaries.
\newblock \emph{Izvestiya RAS, Energetika}, \penalty0 (5):\penalty0 133--185,
  1999.

\bibitem[Kartashov(2001)]{kartashov2001}
E.M. Kartashov.
\newblock \emph{Analytical Methods in the Theory of Heat Conduction in Solids}.
\newblock Vysshaya Shkola, Moscow, 2001.

\bibitem[Kartashov and Lyubov(1974)]{KartashovLyubov1974}
E.M. Kartashov and B.~Ya Lyubov.
\newblock Analytical methods in the theory of heat conduction in solids.
\newblock \emph{Izv. Akad. Nauk SSSR, Energ. Trans.}, \penalty0 (6):\penalty0
  83--111, 1974.

\bibitem[Kristensson(2009)]{Kristensson2009}
G.~Kristensson.
\newblock Jump conditions for single and doublelayer potentials, 2009.
\newblock \url{file:///C:/AndreyItkin/My
  Finance/FinPapers/BK/liter/JumpConditions.pdf}.

\bibitem[Lipton(2002)]{Lipton2002}
A.~Lipton.
\newblock The vol smile problem.
\newblock \emph{Risk}, pages 61--65, February 2002.

\bibitem[Lipton and {de Prado}(2020)]{LiptonPrado2020}
A.~Lipton and M.L. {de Prado}.
\newblock A closed-form solution for optimal mean-reverting trading strategies,
  2020.
\newblock available at
  \url{https://papers.ssrn.com/sol3/papers.cfm?abstract_id=3534445}.

\bibitem[Lipton and Kaushansky(2018)]{LiptonKu2018}
A.~Lipton and V.~Kaushansky.
\newblock On the first hitting time density of an ornstein-uhlenbeck process,
  October 2018.
\newblock URL \url{https://arxiv.org/pdf/1810.02390.pdf}.

\bibitem[Lipton et~al.(2019)Lipton, Kaushansky, and Reisinger]{LKR2019}
A.~Lipton, V.~Kaushansky, and C.~Reisinger.
\newblock {Semi-analytical solution of a McKean-Vlasov equation with feedback
  through hitting boundary}.
\newblock \emph{Euro. Jnl of Applied Mathematics}, pages 1--34, 2019.

\bibitem[Lyapunov(1949)]{LyapunovBook}
A.M. Lyapunov.
\newblock \emph{Works on the theory of potential}.
\newblock Technical and Theoretical State Publishing House, Moscow - Leningrad,
  1949.
\newblock in Russian.

\bibitem[Mijatovic(2010)]{Mijatovic2010}
A.~Mijatovic.
\newblock Local time and the pricing of time-dependent barrier options.
\newblock \emph{Finance and Stochastics}, 14\penalty0 (1):\penalty0 13--48,
  2010.

\bibitem[Mumford et~al.(1983)Mumford, Nori, Previato, and
  Stillman]{mumford1983tata}
D.~Mumford, C.~Musiliand~M. Nori, E.~Previato, and M.~Stillman.
\newblock \emph{Tata Lectures on Theta}.
\newblock Progress in Mathematics. Birkh{\"a}user Boston, 1983.
\newblock ISBN 9780817631093.

\bibitem[Polyanin(2002)]{Polyanin2002}
A.D. Polyanin.
\newblock \emph{Handbook of linear partial differential equations for engineers
  and scientists}.
\newblock Chapman \& Hall/CRC, 2002.

\bibitem[Quaife(2011)]{Quaife2011}
B.~Quaife.
\newblock \emph{Fast Integral Equation Methods for the Modified Helmholtz
  Equation}.
\newblock PhD thesis, University of Calgary, 2011.

\bibitem[Spivak et~al.(2010)Spivak, Veerapaneni, and Greengard]{FGT2010}
M.~Spivak, S.K. Veerapaneni, and L.~Greengard.
\newblock The fast generalized gauss transform.
\newblock \emph{{SIAM Journal on Scientific Computing}}, 32\penalty0
  (5):\penalty0 3092--3107, 2010.

\bibitem[Tikhonov and Samarskii(1963)]{TS1963}
A.N. Tikhonov and A.A. Samarskii.
\newblock \emph{Equations of mathematical physics}.
\newblock Pergamon Press, Oxford, 1963.

\bibitem[van~der Pol and Bremmer(1950)]{PolBremer1950}
B.~van~der Pol and H.~Bremmer.
\newblock \emph{Operational calculus based on the two- sided Laplace integral}.
\newblock Cambridge University Press, Cambridge, UK, 1950.

\bibitem[Wazwaz(2011)]{Wazwaz2011}
A.~M. Wazwaz.
\newblock \emph{Linear and Nonlinear Integral Equations}.
\newblock Higher Education Press, Beijing and Springer-Verlag GmbH Berlin
  Heidelberg, 2011.

\end{thebibliography}

\vspace{0.4in}
\appendixpage
\appendix
\numberwithin{equation}{section}
\setcounter{equation}{0}

\section{Simplification of \eqref{u_final}} \label{App1}

To simplify \eqref{u_final} we proceed by integrating by parts the last integral in \eqref{u_final}
\begin{align*}
&\int_0^\tau e^{-\frac{\pi^2 n^2}{l^2(\tau)} (\tau - s)} h_1(n, s, \tau) ds =
-\frac{B(\tau) l^2(\tau)}{\pi^2 n^2} \left[\sin\left( \frac{\pi n }{l(\tau)} [z(\tau)-y(\tau)] \right) - \sin\left( \frac{\pi n }{l(\tau)}[y(\tau)-y(\tau)]\right) \right] \\
&+ \frac{B(0) l^2(\tau)}{\pi^2 k^2} e^{-\frac{\pi^2 n^2}{l^2(\tau)}\tau} \left[\sin\left( \frac{\pi n }{l(\tau)} [z(0)-y(\tau)] \right) - \sin\left( \frac{\pi n }{l(\tau)}[y(0)-y(\tau)]\right) \right] \\
&-\frac{l(\tau)}{\pi n}\Bigg[ f^-(\tau) \cos\left( \frac{\pi n }{l(\tau)} [y(\tau)-y(\tau)]\right) - f^+(\tau)\cos\left( \frac{\pi n }{l(\tau)} [z(\tau) - y(\tau)]\right)  \Bigg] \\
&+\frac{l(\tau)}{\pi n} e^{-\frac{\pi^2 n^2}{l^2(\tau)}\tau} \Bigg[ f^-(0) \cos\left( \frac{\pi n }{l(\tau)} [y(0)-y(\tau)]\right) - f^+(0)\cos\left( \frac{\pi n }{l(\tau)} [z(0) - y(\tau)]\right)  \Bigg] \\
&+ \frac{l^2(\tau)}{\pi^2 n^2} \int_0^\tau B(s) e^{-\frac{\pi^2 n^2}{l^2(\tau)}(\tau - s)}\Bigg( \frac{\pi^2 n^2}{l^2(\tau)} \left[\sin\left( \frac{\pi n }{l(\tau)} [z(s)-y(\tau)] \right) - \sin\left( \frac{\pi n }{l(\tau)}[y(s)-y(\tau)]\right) \right] \\
&+ \frac{\pi n}{l(\tau)} \left[z'(s)\cos\left( \frac{\pi n }{l(\tau)} [z(s)-y(\tau)] \right) - y'(s)\cos\left( \frac{\pi n }{l(\tau)}[y(s)-y(\tau)]\right) \right] \Bigg) ds \\
&+\frac{l(\tau)}{\pi n} \int_0^\tau f^-(s) e^{-\frac{\pi^2 n^2}{l^2(\tau)}(\tau - s)}\Bigg( \frac{\pi^2 n^2}{l^2(\tau)} \cos\left( \frac{\pi n }{l(\tau)} [y(s)-y(\tau)] \right) - \frac{\pi n}{ l(\tau)} y'(s) \sin\left( \frac{\pi n }{l(\tau)}[y(s)-y(\tau)]\right) \Bigg) ds \\
&-\frac{l(\tau)}{\pi n} \int_0^\tau f^+(s) e^{-\frac{\pi^2 n^2}{l^2(\tau)}(\tau - s)}\Bigg( \frac{\pi^2 n^2}{l^2(\tau)} \cos\left( \frac{\pi n }{l(\tau)} [y(s)-y(\tau)] \right) - \frac{\pi n}{ l(\tau)} z'(s) \sin\left( \frac{\pi n }{l(\tau)}[y(s)-y(\tau)]\right) \Bigg) ds \\
&+\frac{l(\tau)}{\pi n} \int_0^\tau B(s) e^{-\frac{\pi^2 n^2}{l^2(\tau)}(\tau - s)} \Bigg[
y'(s) \cos\left( \frac{\pi n }{l(\tau)} [y(s)-y(\tau)] \right) - z'(s) \cos\left( \frac{\pi n }{l(\tau)} [z(s)-y(\tau)] \right)
\Bigg] ds,
\end{align*}
\noindent or
\begin{align} \label{int_h1}
\int_0^\tau & e^{-\frac{\pi^2 n^2}{l^2(\tau)} (\tau - s)} h_1(n, s, \tau) ds =
\frac{l(\tau)}{\pi n}\Bigg[ (-1)^{n} f^+(\tau) -f^-(\tau) \Bigg] + \alpha(\tau, n) e^{-\frac{\pi^2 n^2}{l^2(\tau)}\tau} + \int_0^\tau e^{-\frac{\pi^2 n^2}{l^2(\tau)}(\tau-s)} \beta(\tau, s, n)ds \nonumber \\
&+ \int_0^\tau e^{-\frac{\pi^2 n^2}{l^2(\tau)}(\tau-s)} B(s) \left[\sin\left( \frac{\pi n }{l(\tau)} [z(s)-y(\tau)] \right) - \sin\left( \frac{\pi n }{l(\tau)}[y(s)-y(\tau)]\right) \right]ds,
\end{align}
\noindent where
\begin{align} \label{alpha_and_beta_def}
\alpha(\tau, n) &= \frac{B(0) l^2(\tau)}{\pi^2 n^2} \left[\sin\left( \frac{\pi n }{l(\tau)} [z(0)-y(\tau)] \right) - \sin\left( \frac{\pi n }{l(\tau)}[y(0)-y(\tau)]\right) \right] \\
& +\frac{l(\tau)}{\pi n} e^{-\frac{\pi^2 n^2}{l^2(\tau)}\tau} \Bigg[ f^-(0) \cos\left( \frac{\pi n }{l(\tau)} [y(0)-y(\tau)]\right) - f^+(0)\cos\left( \frac{\pi n }{l(\tau)} [z(0) - y(\tau)]\right)  \Bigg], \nonumber \\
\beta(\tau, s, n) &= f^-(s) \Bigg( \frac{\pi  n}{l (\tau)} \cos\left( \frac{\pi n }{l(\tau)} [y(s)-y(\tau)] \right) - y'(s) \sin\left( \frac{\pi n }{l(\tau)}[y(s)-y(\tau)]\right) \Bigg) \nonumber \\
& -f^+(s) \Bigg( \frac{\pi n}{l (\tau)} \cos\left( \frac{\pi n }{l(\tau)} [z(s)-y(\tau)] \right) - z'(s) \sin\left( \frac{\pi n }{l(\tau)}[z(s)-y(\tau)]\right) \Bigg). \nonumber
\end{align}

Now we can transform the whole term
\begin{equation} \label{wholeTerm}
\frac{2}{l(\tau)}\sum_{n= 1}^{\infty} \sin\left(\pi n  \frac{x - y(\tau)}{l(\tau)}\right) \int_0^\tau e^{-\frac{\pi^2 n^2}{l^2(\tau)} (\tau - s)} h_1(n, s, \tau) ds,
\end{equation}
\noindent which appears in \eqref{u_final}. For doing that, first let us consider the integral
\begin{equation}
\int_{y(0)}^{z(0)} u(0, \xi) \sin\left(\frac{\pi n}{l(\tau)}[\xi - y(\tau)] \right) d\xi,
\end{equation}
\noindent which is also a part of the RHS in \eqref{u_final}. Recalling that by definition in \eqref{wEq} $u(0,x) = U(0,x) - A(0) - B(0) x$, and applying another identity
\begin{align*}
\int_{y(0)}^{z(0)} \left[ A(0) + B(0) \xi \right] & \sin\left( \frac{\pi n }{l(\tau)} [\xi - y(\tau)]\right) d\xi
= \frac{l(\tau)}{\pi ^2 n^2} \Bigg\{ \pi n (A(0) + B(0) y(0)) \cos \left( \frac{\pi n (y(0) - y(\tau)}{l(\tau)}\right) \\
&- \pi n \left[A(0) + B(0) z(0)\right] \cos \left( \frac{\pi n (z(0) - y(\tau)}{l(\tau)}\right) \\
&+ B(0) l(\tau) \left[ \sin \left( \frac{\pi n (z(0) - y(\tau)}{l(\tau)}\right)  - \sin \left( \frac{\pi n (y(0) - y(\tau)}{l(\tau)}\right] \right) \Bigg\},
\end{align*}
\noindent we obtain
\begin{equation} \label{u0U}
\int_{y(0)}^{z(0)} u(0,\xi) \sin\left( \frac{\pi n }{l(\tau)} [\xi - y(\tau)]\right) d\xi = \int_{y(0)}^{z(0)} U(0,\xi) \sin\left( \frac{\pi n }{l(\tau)} [\xi - y(\tau)]\right) d\xi - \alpha(\tau, n).
\end{equation}

Therefore, the terms proportional to $\alpha(\tau,n)$ in \eqref{u_final} are cancelling out. Also, substituting  \eqref{term1} into \eqref{u_final} and moving the RHS of \eqref{term1} into the LHS of \eqref{u_final} results in the change of $u(\tau,x)$ to $U(\tau, x)$ in the LHS, and cancelling out the terms proportional to $B(s)$. Finally, introducing the new function $F(\tau, x)$
\begin{equation} \label{Fdef}
F(\tau, x) = A(\tau) + B(\tau) x - \frac{2}{\pi} \sum_{n = 1} ^{\infty} \frac{(-1)^{n- 1} f^+(\tau) + f^-(\tau)}{n} \sin\left( \frac{\pi n }{l(\tau)} [x - y(\tau)]\right)
\end{equation}
\noindent we obtain the representation of $U(\tau, x)$
\begin{align} \label{U_final2}
U(\tau, x) &= \frac{2}{l(\tau)}  \sum_{n= 1}^{\infty} \sin\left(\pi n  \frac{x - y(\tau)}{l(\tau)}\right) \Bigg\{ e^{-\frac{\pi^2 n^2}{l^2(\tau)} \tau} \int_{y(0)}^{z(0)} U(0, \xi) \sin\left(\frac{\pi n}{l(\tau)}[\xi - y(\tau)] \right) d\xi \\
&+ \int_0^\tau e^{-\frac{\pi^2 n^2}{l^2(\tau)} (\tau - s)} \Big[ \Phi(s)  \sin\left(  \frac{\pi n}{l(\tau)} [z(s) - y(\tau)]\right) + \Psi(s)  \sin\left( \frac{\pi n}{l(\tau)}[y(s) - y(\tau)]\right) \nonumber \\
&+ \beta(\tau, s, n) \Big] ds  \Bigg\} + F(\tau,x). \nonumber
\end{align}

Further, using the well-known identities, \citep{GR2007}
\begin{equation} \label{iden2}
\sum_{k =1}^{\infty} \frac{\sin kx}{k} = \frac{\pi - x}{2}, \quad 0 < x < \pi, \qquad
\sum_{k = 1}^{\infty} (-1)^{k - 1}\frac{\sin kx}{k} = \frac{x}{2}, \quad 0 < x < \pi,
\end{equation}
\noindent yields the following relationship
\begin{align} \label{term1}
\sum_{n = 1}^{\infty}&\frac{2}{\pi n}\Bigg[ (-1)^{n-1} f^+(\tau) +f^-(\tau) \Bigg] \sin\left(\pi n  \frac{x - y(\tau)}{l(\tau)}\right)
= \frac{2}{\pi} \Bigg\{ \frac{ \pi f^+(\tau)}{2} \frac{x - y(\tau)}{l(\tau)}  + \frac{f^-(\tau)}{2} \left[ \pi - \pi \frac{x - y(\tau)}{l(\tau)} \right] \Bigg\} \nonumber \\
&= \frac{f^+(\tau) - f^-(\tau)}{l(\tau)} x + \frac{f^+(\tau) y(\tau) - f^-(\tau) z(\tau) }{l(\tau)}
= -\left[A(\tau) + B(\tau) x \right], \quad x \in (y(\tau), z(\tau)).
\end{align}

With the help of \eqref{term1} we arrive at another formula for $F(\tau, x)$:
\begin{align} \label{Fdef_alt}
F(\tau, x) =
\begin{cases}
f^-(\tau), &x = y(\tau), \\
0, &x \in (y(\tau), z(\tau)), \\
f^+(\tau), &x  = z(\tau).
\end{cases}
\end{align}

Combining \eqref{U_final2} and \eqref{Fdef_alt} together, and taking into account that the Fourier series in \eqref{U_final2} is equal to zero if $x = y(\tau)$ or $x = z(\tau)$, yields
\begin{align} \label{U_final3}
U(\tau, x) =
\begin{cases}
f^-(\tau), &x = y(\tau), \\
\tilde{U}(\tau, x), &x \in (y(\tau),  z(\tau)), \\
f^+(\tau), &x = z(\tau).
\end{cases}
\end{align}

Here the function $\tilde U(\tau, x): (y(\tau), z(\tau)) \times \mathbb{R}_{+} \to \mathbb{R} $ is defined as follows
\begin{align} \label{Utilde}
\tilde{U}&(\tau, x) = \frac{2}{l(\tau)}  \sum_{n= 1}^{\infty} \sin\left(\pi n  \frac{x - y(\tau)}{l(\tau)}\right) \Bigg\{ e^{-\frac{\pi^2 n^2}{l^2(\tau)} \tau} \int_{y(0)}^{z(0)} U(0, \xi) \sin\left(\frac{\pi n}{l(\tau)}[\xi - y(\tau)] \right) d\xi \\
&+ \int_0^\tau e^{-\frac{\pi^2 n^2}{l^2(\tau)} (\tau - s)} \Big[ \Phi(s)  \sin\left(  \frac{\pi n}{l(\tau)} [z(s) - y(\tau)]\right) + \Psi(s)  \sin\left( \frac{\pi n}{l(\tau)}[y(s) - y(\tau)]\right) + \beta(\tau, s, n) \Big] ds  \Bigg\}. \nonumber
\end{align}

Note, that for the derivative $\fp{F(\tau, x)}{x}$ we have
\begin{align*}
&\fp{F(\tau, x)}{x} =B(\tau) - \frac{2}{l(\tau)} \left\{ f^+(\tau) \sum_{n = 1}^{\infty} (-1)^{n - 1} \cos\left( \frac{\pi n}{ l(\tau)} [x - y(\tau)] \right) + f^-(\tau) \sum_{n = 1}^{\infty} \cos\left( \frac{\pi n}{ l(\tau)} [x - y(\tau)]  \right) \right\} \\
&= \frac{f^+(\tau) - f^-(\tau)}{l(\tau)} - \frac{2}{l(\tau)} \left\{ f^+(\tau) \sum_{n = 1}^{\infty} (-1)^{n - 1} \cos\left( \frac{\pi n}{ l(\tau)} [x - y(\tau)] \right) + f^-(\tau) \sum_{n = 1}^{\infty} \cos\left( \frac{\pi n}{ l(\tau)} [x - y(\tau)]  \right) \right\} \\
&= \frac{2}{l(\tau)} \left\{ f^+(\tau) \left[ \frac{1}{2} +  \sum_{n = 1}^{\infty} (-1)^{n} \cos\left( \frac{\pi n}{ l(\tau)} [x - y(\tau)] \right) \right] - f^-(\tau) \left[ \frac 12 +\sum_{n = 1}^{\infty} \cos\left( \frac{\pi n}{ l(\tau)} [x - y(\tau)]  \right)  \right] \right\}.
\end{align*}
Applying well known representations for the Dirac delta function $\delta(x)$
\begin{align*}
\delta (z(\tau) - x) &= \frac{2}{l(\tau)}\left[ \frac{1}{2} +  \sum_{n = 1}^{\infty} (-1)^{n} \cos\left( \frac{\pi n}{ l(\tau)} [x - y(\tau)] \right) \right], \\
\delta (x - y(\tau))  &= \frac{2}{l(\tau)}\left[ \frac 12 +\sum_{n = 1}^{\infty} \cos\left( \frac{\pi n}{ l(\tau)} [x - y(\tau)]  \right)  \right]
\end{align*}
\noindent yields the following formula for the derivative of $F(\tau, x)$
\begin{align} \label{eq_dFdx}
\fp{F(\tau, x)}{x} = f^+(\tau) \delta(x -z(\tau))- f^-(\tau) \delta(x - y(\tau)).
\end{align}
Thus, this derivative is defined only in the sense of distributions.

\section{Transformation of \eqref{U_final} to \eqref{U_final_alt}}.  \label{App2}

Applying a product-to-sum trigonometric identities to \eqref{Utilde} yields
\begin{align} \label{Utilde2}
\tilde{U}(&\tau, x) = \frac{1}{l(\tau)}  \sum_{n= 1}^{\infty} \Bigg\{  e^{-\frac{\pi^2 n^2}{l^2(\tau)} \tau} \int_{y(0)}^{z(0)} U(0, \xi) \left[ \cos\left(\frac{\pi n}{l(\tau)}[x - \xi] \right) - \cos\left(\frac{\pi n}{l(\tau)}[x + \xi - 2 y(\tau)] \right) \right] d\xi \\
&+ \int_0^\tau e^{-\frac{\pi^2 n^2}{l^2(\tau)} (\tau - s)} \left[\Phi(s) + f^+(s) z'(s)\right] \left[ \cos\left(\frac{\pi n}{l(\tau)}[x - z(s)] \right) - \cos\left(\frac{\pi n}{l(\tau)}[x + z(s) - 2 y(\tau)] \right) \right] ds \nonumber \\
&+ \int_0^\tau  e^{-\frac{\pi^2 n^2}{l^2(\tau)} (\tau - s)} \left[\Psi(s) - f^-(s) y'(s) \right]\left[ \cos\left(\frac{\pi n}{l(\tau)}[x - y(s)] \right) - \cos\left(\frac{\pi n}{l(\tau)}[x + y(s) - 2 y(\tau)] \right) \right]  ds
\nonumber \\
&+\frac{\pi n}{l(\tau)} \int_0^\tau  e^{-\frac{\pi^2 n^2}{l^2(\tau)} (\tau - s)} f^-(s) \left[
\sin\left(\frac{\pi n}{l(\tau)}[x - y(s)] \right) +\sin\left(\frac{\pi n}{l(\tau)}[x + y(s) - 2 y(\tau)] \right) \right]ds
\nonumber \\
&-\frac{\pi n}{l(\tau)} \int_0^\tau  e^{-\frac{\pi^2 n^2}{l^2(\tau)} (\tau - s)} f^+(s) \left[
\sin\left(\frac{\pi n}{l(\tau)}[x - z(s)] \right) +\sin\left(\frac{\pi n}{l(\tau)}[x + z(s) - 2 y(\tau)] \right) \right]ds
\Bigg\}, \nonumber
\end{align}

Since the functions
\begin{equation*}
h_1(n) =  e^{-\beta n^2} \cos\left(\alpha n\right) ,\qquad  h_2(n) = n e^{-\beta n^2} \sin\left(\alpha n\right)
\end{equation*}
\noindent are even, $h_2(0) = 0$, and in the first three lines of \eqref{Utilde2} we have a difference of cosines, so at $n=0$ the difference vanishes, the series in \eqref{Utilde2} can be slightly modified by replacing
\begin{equation*}
\sum_{n= 1}^{\infty} h_i(n) = \frac{1}{2}  \sum_{n= -\infty}^{\infty} h_i(n), \quad i=1,2.
\end{equation*}

Now applying formulas \eqref{Summation_formula_main_eq} to \eqref{Utilde2} and using
\begin{equation*}
\alpha = \frac{x - \xi }{l(\tau)}, \quad \frac{1}{2\beta} = \frac{\tau}{l^2(\tau)}, \quad  2 \sqrt{\frac{\beta}{2\pi}} =  \frac{l(\tau)}{\sqrt{\pi \tau}}, \quad \frac \beta 2 (2 n + \alpha)^2 = \frac{l^2(\tau)}{4\tau} \left(2n + \frac{x - \xi }{l(\tau)} \right)^2 = \frac{(x - \xi + 2n l(\tau))^2}{4\tau},
\end{equation*}
\noindent we obtain the following identities
\begin{align} \label{sums_ref}
\frac{1}{2 l(\tau)}  \sum_{n= -\infty}^{\infty} e^{-\frac{\pi^2 n^2}{l^2(\tau)} (\tau - s)}
\cos\left(\frac{\pi n}{l(\tau)}[x - \xi] \right)
&= \frac{1}{2 \sqrt{\pi (\tau - s)}} \sum_{n = -\infty} ^{\infty} e^{-\frac{(x - \xi + 2n l(\tau))^2}{4(\tau - s)}} \\
\frac{1}{2 l(\tau)}  \sum_{n= -\infty}^{\infty} e^{-\frac{\pi^2 n^2}{l^2(\tau)} (\tau - s)}
\frac{\pi n}{l(\tau)}  \sin\left(\frac{\pi n}{l(\tau)}[x - \xi] \right)
&= \frac{1}{4 \sqrt{\pi (\tau - s)^3}} \sum_{n = -\infty} ^{\infty} (x - \xi + 2 n l(\tau)) e^{-\frac{(x - \xi + 2n l(\tau))^2}{4(\tau - s)}}.
\nonumber
\end{align}

Observe that each term in \eqref{Utilde2} can be represented as one of the series in \eqref{sums_ref}. Therefore, assuming $x \in (y(\tau),z(\tau))$, we immediately arrive at the alternative representation for $\tilde U(\tau, x)$
\begin{align} \label{sol_u}
\tilde{U}(\tau, x) &= \sum_{n=-\infty}^{\infty} \Bigg \{ \int_{y(0)}^{z(0)} U(0, \xi) \Upsilon_n(x, \tau  \,|\, \xi, 0) d\xi +\int_0^\tau \left[\Phi(s) + f^{+}(s) z'(s) \right]\Upsilon_n (x, \tau | z(s), s)ds, \nonumber \\
&+\int_0^\tau \left[\Psi(s)  - f^{-}(s) y'(s) \right] \Upsilon_n(x, \tau \,|\, y(s), s)ds \\
&+ \int_0^\tau  f^-(s) \Lambda_n (x, \tau  \,|\,y(s), s) - f^+(s) \Lambda_n(x, \tau  \,|\, z(s), s) ds \Bigg\} + F_1(\tau,x), \nonumber \\
\Upsilon_n&(x, \tau \,|\, \xi, s) = \frac{1}{2\sqrt{\pi (\tau - s)}}\left[e^{-\frac{(2n l(\tau)  +x - \xi)^2}{4 (\tau - s)}} - e^{-\frac{(2n l(\tau)  + x +  \xi - 2 y(\tau))^2}{4 (\tau - s)}} \right],  \nonumber \\
\Lambda_n&(x, \tau \,|\, \xi, s)  = \frac{x - \xi + 2n l(\tau)}{4 \sqrt{\pi (\tau -s)^3}} e^{-\frac{(2n l(\tau)  + x  - \xi)^2}{4 (\tau - s)}} + \frac{x + \xi - 2 y(\tau) + 2n l(\tau) }{4 \sqrt{\pi (\tau -s)^3}} e^{-\frac{(2n l(\tau)  + x +  \xi - 2 y(\tau))^2}{4 (\tau - s)}}. \nonumber
\end{align}

\subsection{The limiting values $x \to y(\tau)$ and $x \to z(\tau)$ in \eqref{sol_u}} \label{AppLimits}

The \eqref{sol_u} provides an alternative representation of the solution $\tilde{U}(\tau,x)$ of the heat equation in \eqref{Heat}  with the initial condition in \eqref{tc} and the boundary conditions in \eqref{bc} at the time-dependent domain $x \in (y(\tau),z(\tau))$ in terms of the Fourier series. In this section we show that the function $\tilde{U}$ can be analytically continued to the boundary points $y(\tau)$ and $z(\tau)$, and
\begin{equation} \label{limU}
\lim_{x \to y(\tau) + 0} \tilde U(\tau, x) = f^-(\tau), \qquad \lim_{x \to z(\tau) - 0} \tilde U(\tau, x) = f^+(\tau).
\end{equation}
It is easy to check that the functions $\Upsilon_n(x,\tau | \xi, s)$ and $\Lambda_n(x,\tau | \xi, s)$ are regular only if $n \neq 0, \ x \in [y(\tau), z(\tau)], \ \xi \in [y(s), z(s)], \ s \to \tau$. In this case the following identities hold
\begin{align}
\lim_{s \to \tau}\Upsilon_n(x,\tau | \xi, s) = 0, \qquad
\lim_{s \to \tau}\Lambda_n(x,\tau | \xi, s) = 0, \qquad  n \neq  0.
\end{align}

At $n = 0$ functions $\Upsilon_0(x,\tau | y(s), s)$ and  $\Lambda_0(x,\tau | y(s), s)$ have a singularity when $s \to \tau,  \ x \to y(\tau)$, and functions $\Upsilon_0(x,\tau | z(s), s)$ and $\Lambda_0(x,\tau | z(s), s)$ - when $s \to \tau,  \ x \to z(\tau)$. Note, that the singularity of $\Upsilon_0$ is integrable and so weak. Therefore, when calculating a corresponding limit of both parts in \eqref{sol_u}, for the regular terms we can switch the order of the integration and limit operators, and then use the following properties
\begin{alignat}{2} \label{UpsilonLambdaLimits}
\lim_{x \to y(\tau) + 0} \sum_{n = -\infty}^{\infty} \Upsilon_n(x, \tau \,|\,\xi ,s)  &= 0, \qquad
\lim_{x \to z(\tau) - 0}  \sum_{n = -\infty}^{\infty} \Upsilon_n(x, \tau \,|\,\xi ,s)  &&= 0, \\ \nonumber
\lim_{x \to y(\tau) + 0} \sum_{\substack{n = -\infty \\ n \neq 0}} \Lambda_n(x, \tau \,|\,\xi ,s)  &= 0, \qquad
\lim_{x \to z(\tau) - 0} \sum_{\substack{n = -\infty \\ n \neq 0}} \Lambda_n(x, \tau \,|\,\xi ,s)  &&= 0, \nonumber \\
\lim_{x \to y(\tau) + 0} \Lambda_0(x, \tau \,|\,z(s) ,s)  &= 0, \qquad
\lim_{x \to z(\tau) - 0} \Lambda_0(x, \tau \,|\,y(s) ,s)  &&= 0, \nonumber
\end{alignat}
\noindent to obtain
\begin{align} \label{limLambda}
\lim_{x \to y(\tau) + 0} \tilde U(\tau, x) &= \lim_{x \to y(\tau) + 0}\int_0^\tau  f^-(s) \Lambda_0 (x, \tau  \,|\,y(s), s) \\
\lim_{x \to z(\tau) - 0} \tilde U(\tau, x) &= -\lim_{x \to z(\tau) - 0}\int_0^\tau  f^+(s) \Lambda_0 (x, \tau  \,|\,z(s), s). \nonumber
\end{align}

To proceed we need the notion of heat potentials and the results obtained in Section~\ref{sec:HP} (see also \citep{TS1963}). It can be shown that the functions $\Lambda_0(x,\tau | y(s), s)$ and $ \Lambda_0(x,\tau | z(s), s)$ can be represented as a sum of double layer heat potentials. Therefore, we can evaluate the limits \eqref{limLambda} with the help of \eqref{lim_defs}, \eqref{fin_varphi}.

In more detail, according to \eqref{limLambda} in the explicit form the limits of $\tilde{U}(\tau,x)$ read
\begin{align*}
\tilde U(\tau,y(\tau)) = \lim_{x \to y(\tau)+0} \int_0^\tau f^{-}(s) \left[ \frac{x - y(s)}{4\sqrt{\pi (\tau - s)^3}} e^{-\frac{(x - y(s))^2}{4(\tau - s)}} +  \frac{x - 2y(\tau) + y(s)}{4\sqrt{\pi (\tau - s)^3}} e^{-\frac{(x -2 y(\tau) +  y(s))^2}{4(\tau - s)}} \right] ds, \\
\tilde U(\tau,z(\tau)) = -\lim_{x \to z(\tau)-0} \int_0^\tau f^{+}(s) \left[\frac{x - z(s)}{4\sqrt{\pi (\tau - s)^3}} e^{-\frac{(x - z(s))^2}{4(\tau - s)}} + \frac{x - 2z(\tau) + z(s)}{4\sqrt{\pi (\tau - s)^3}} e^{-\frac{(x -2 z(\tau) +  z(s))^2}{4(\tau - s)}} \right] ds.
\end{align*}

This can also be re-written in the form
\begin{align} \label{Ulimits}
\tilde U(\tau,y(\tau)) &= \lim_{x \to y(\tau)+0} \int_0^\tau f^{-}(s) \frac{x - y(s)}{4\sqrt{\pi (\tau - s)^3}} e^{-\frac{(x - y(s))^2}{4(\tau - s)}} ds \\
-&  \lim_{2 y(\tau) - x \to y(\tau)-0} \int_0^\tau f^{-}(s) \frac{2y(\tau) -x  - y(s)}{4\sqrt{\pi (\tau - s)^3}} e^{-\frac{(x -2 y(\tau) +  y(s))^2}{4(\tau - s)}}  ds, \nonumber \\
\tilde U(\tau,z(\tau)) &= -\lim_{x \to z(\tau)-0} \int_0^\tau f^{+}(s) \frac{x - z(s)}{4\sqrt{\pi (\tau - s)^3}} e^{-\frac{(x - z(s))^2}{4(\tau - s)}} ds \nonumber \\
+&  \lim_{2 z(\tau) - x \to z(\tau) + 0} \int_0^\tau f^{+}(s) \frac{2z(\tau) -x  - z(s)}{4\sqrt{\pi (\tau - s)^3}} e^{-\frac{(x -2 z(\tau) +  z(s))^2}{4(\tau - s)}}  ds. \nonumber
\end{align}

Using \eqref{fin_varphi}, these expressions can be transformed to
\begin{equation*}
\tilde U(\tau, y(\tau)) = \frac{f^-(\tau)}{2} + \frac{f^-(\tau)}{2} = f^-(\tau), \qquad
\tilde U(\tau, z(\tau)) = \frac{f^+(\tau)}{2} + \frac{f^+(\tau)}{2} = f^+(\tau).
\end{equation*}

Since $\tilde{U}(\tau,x)$ has the same limits as the boundary values of $U(\tau,x)$, and at $x \in (y(\tau),z(\tau))$ we have $\tilde{U}(\tau,x) = U(\tau,x)$, \eqref{sol_u} allows an alternative form
\begin{align*}
U(\tau, x) &= \sum_{n=-\infty}^{\infty} \Bigg \{ \int_{y(0)}^{z(0)} U(0, \xi) \Upsilon_n(x, \tau  \,|\, \xi, 0) d\xi +\int_0^\tau \left[\Phi(s) + f^{+}(s) z'(s) \right]\Upsilon_n (x, \tau | z(s), s)ds, \nonumber \\
&\qquad +\int_0^\tau \left[\Psi(s)  - f^{-}(s) y'(s) \right] \Upsilon_n(x, \tau \,|\, y(s), s)ds \\
&\qquad + \int_0^\tau  f^-(s) \Lambda_n (x, \tau  \,|\,y(s), s) - f^+(s) \Lambda_n(x, \tau  \,|\, z(s), s) ds \Bigg\}. \nonumber
\end{align*}

\end{document}